\documentclass[11pt]{article}

\usepackage[margin=1in]{geometry}
\usepackage{setspace}
\usepackage{fancyhdr}

\usepackage{amsmath}
\usepackage{amssymb}
\usepackage{amsthm}
\usepackage{mathtools}

\usepackage{graphicx}
\usepackage{booktabs}
\usepackage{tabularx}
\usepackage{array}
\usepackage{makecell}

\usepackage{enumitem}

\newtheorem{proposition}{Proposition}

\newtheorem{definition}{Definition}

\title{Strategic Voting Axioms for Multiwinner Elections}
\author{Austin Lieberman and Matthew I. Jones}
\date{2026}
\begin{document} 


\maketitle

\begin{abstract}
We create a set of 6 axioms relating to strategic voting and study them on single transferable vote (STV), Meek STV, Bloc, Chamberlin-Courant, Monroe, and $k$-Borda. For Chamberlin-Courant, Monroe, and $k$-Borda, we implement ``optimistic'' and ``pessimistic'' models for handing incomplete ballots, thus turning each of the three into two separate voting rules. We determine which voting rules satisfy each axiom. We find that under Chamberlin-Courant and $k$-Borda, the optimistic models encourage burying, while the pessimistic models encourage truncation strategies, indicating that the handling of partial ballots has tremendous implications for strategic voting. We define an axiom called no skipped payment that differentiates similar rules, as Meek STV and Chamberlin-Courant satisfy it, while STV and Monroe fail. Each voting rule was found to fail favorite betrayal, revealing that none of them are completely immune to strategic voting. In general, STV variants are relatively resistant to strategic voting while Monroe is susceptible to a wide range of strategies.
\end{abstract}

\section{Introduction}
\label{sec:introduction}

Many elections require multiple candidates to be selected, including parliamentary elections, creating a shortlist of job candidates or finalists, or electing school boards and city councils. Countries such as Scotland, Australia, and the United States use multiwinner elections for local and federal governments.

A multiwinner voting rule takes in ballots and selects a winning set of candidates, referred to as the committee. We look only at voting rules in which the ballots are ordered rankings of the candidates. There are many such rules, and prior literature has compared them by defining mathematical properties, or axioms, and testing which voting rules satisfy or fail them \cite{elkind,faliszewski2019,graham}. Most of this axiomatic work has been focused on creating properties relating to proportional representation -- the winning committee should proportionally represent the electorate \cite{aziz,elkind}. For example, Elkind et al. \cite{elkind} define a proportionality property, ``solid coalitions'': if at least $n/k$ voters rank a candidate first, where $n$ is the number of voters and $k$ is the size of the committee, that candidate must be elected. Properties like these carry few incentives for voters to order their ballot in a certain way. If a voter knows that a voting rule fails solid coalitions, it is not clear if this should affect how they vote. On the other hand, consider the later-no-harm property, which as been studied on single-winner elections: a winning candidates should not become a loser when a voter appends candidates to their ballot \cite{woodall1997}. If an election rule fails later-no-harm, a voter has a clear incentive to submit a ballot ranking fewer candidates than they would have on an honest ballot. The multiwinner elections literature, especially for voting rules using ordered ballots, is largely lacking such incentive-based axioms. We define a set of axioms that relate to how voters are incentivized to insincerely rank candidates on their ballots in an attempt to elect a more favorable winning committee.

Dating back to Arrow \cite{arrow}, the axiomatic approach often followed in social choice literature aims to evaluate the fairness of rules for turning individual preferences into a group decision. The Gibbard-Satterthwaite theorem \cite{gibbard,satt} states that any non-dictatorial single-winner voting rule is susceptible to some form of strategic voting. Strategic voting is generally at odds with fairness. Dishonest voters may have an advantage over honest ones. As argued by Brandt et al. \cite{brandt}, a voter who has greater resources to identify potentially beneficial strategies has an unfair advantage. Thus, axiomatizing the strategic susceptibility of voting rules is necessary for evaluating their fairness.

These axioms also show how voting rules respond to incomplete or partial ballots. Much of the multiwinner social choice literature assumes that voters have complete preferences over all of the candidates and that every voter ranks every candidate. However, this assumption is unrealistic, especially in elections with many candidates or elections where all candidates may not be well known, like political elections~\cite{graham}. Furthermore, in some jurisdictions, voters only have the option to rank a few candidates \cite{min}. To this end, we adapt common voting rules so that they accept partial ballots, and several of the axioms we create explicitly involve the strategic consequences of submitting partial ballots. Graham-Squire et al. \cite{graham} developed axioms to study the representation implications, but not the strategic implications, of partial ballots in multiwinner election.

The rest of the paper is structured as follows. In Section \ref{sec:preliminaries} all preliminary information is provided and the axioms are formalized. In Section \ref{sec:results} proofs are provided for each voting rule failing or satisfying each axiom, and Section \ref{sec:discussion} contains a discussion of the results. The results are summarized in Table \ref{tab:results}.

\section{Preliminaries}
\label{sec:preliminaries}

We think of a multiwinner voting rule is a function which takes as input $C$, the set of candidates, $P$, the profile of ballots (referred to as the preference profile), and $k$, the size of the committee, and returns $W$, the winning committee, where $W \subseteq C$ and $\lvert W \rvert = k$. $C=\{c_1, ..., c_m\}$, where $m$ is the number of candidates, and $P=(\succ _1,...\succ_n)$, where $n$ is the number of voters in the election. In the case of a tie, we assume the winning committee is chosen randomly. Each $\succ_i$ represents the ballot of voter $i$, which is an ordered ranking of between 1 and $m$ candidates. Voters cannot represent ties on their ballots and may rank as many or as few candidates as they would like. 

We use the notation $\succ_i^H$ to indicate that $\succ_i$ -- the ballot voter $i$ submits -- is identical to voter $i$'s honest ordered preferences of the candidates. When we write just $\succ_i$, without an $H$, that does not necessarily mean that $\succ_i$ is different from $i$'s honest ordered preferences; it just means that it does not matter for our purposes whether $\succ_i$ is honest or manipulated. $\succ_i^H$ can include fewer than $m$ candidates. It is possible, and sometimes likely, that a voter would not have complete preferences over all of the candidates. 

Additional notation we use includes $r_i(c)$, meaning the position voter $i$ ranks $c$ on $\succ_i$ (where the first-ranked candidate's position is 1), $l_i$, meaning the number of candidates ranked on $\succ_i$, $a \succ_i b$, meaning $i$ ranks $a$ above $b$ on $\succ_i$, and $C_i$, meaning the set of candidates on $\succ_i$. We use preference profiles for which we tabulate the committee when voter 1 submits their original ballot, $\succ_1$ or $\succ_1^H$, and then use $\succ_1^*$ to represent voter 1's manipulated ballot. When we do so, the rest of the notation changes accordingly: $r_1^*(c)$, for example, would mean the ranking of $c$ on $\succ_1^*$.

\subsection{Voting Rules}
\label{subsec:votingrules}

In this paper we do not look at approval-based rules, with one exception. We adapt Bloc so that it accepts ordered ballots. Approval rules are a significant family of multiwinner voting methods, but we chose not to include them since most of the voting strategies addressed in this paper work only with ordered ballots. There has been axiomatic working studying the strategyproofness of multiwinner approval-based rules \cite{peters,caragiannis2026}.

We separate the voting rules into two classes: candidate-based rules and committee-based rules. Candidate-based rules elect each candidate separately, while committee-based rules elect the committee as a unit and voters are represented by a single member of the committee. 

\subsubsection{Candidate-based rules}

\paragraph{Bloc} 
The normal implementation of Bloc, or $k$-approval, lets voters indicate up to $k$ candidates they approve of, and the $k$ candidates with the highest approval scores are elected. Our implementation of Bloc instead takes in ranked ballots and the top $k$ ranked candidates each get 1 point. The $k$ candidates with the most points are elected.

\paragraph{Single transferable vote (STV)} STV elects the committee in a multi-round process, repeating until $k$ candidates have been elected. Each round works as follows. If any candidate $c$ has a score greater than or equal to the quota $q= \lfloor \frac{n}{k+1} \rfloor +1$, they are elected and each of $c$'s ballots transfers transfer proportionally to the next-ranked candidates, assuming all elected and eliminated candidates are removed from all ballots. Formally, the value transferred from each ballot can be calculated as $ \frac{(V)(T(c) - q)}{T(c)}$, where $V$ is the value of that ballot held by $c$ and $T(c)$ is $c$'s current vote total. If no candidate reaches quota, the candidate with the lowest vote total is eliminated from the election and each ballot on which they are ranked first transfers to the next-ranked candidate. This process repeats until either $k$ candidates have been elected, or the number of remaining candidates is equal to the number of empty seats in the committee.

There are many small variations in STV, handling various corner cases and minute details, but this description provides enough information for the reader. We use the version of STV implemented in Scotland. Full details can be found in the statute that implemented it \cite{scottish}.

\paragraph{Meek STV}
As with STV, we will not cover all details of Meek's implementation \cite{meek1969}. Full details are provided by Hill et al. \cite{hill1987} and Hill \cite{hill2006}.

Meek STV differs from STV in two ways that are relevant to the properties this paper explores. The first is that elected candidates continue to receive transferred ballots. Every candidate $c$ has a keep weight $w_c$. Let a \textit{hopeful} candidate be one that is neither elected nor eliminated. Hopeful candidates have a keep weight of 1, meaning they keep each ballot at its full value. Eliminated candidates have a weight of 0, keeping no part of any ballot. Elected candidates have a weight between 0 and 1 that is calculated such that by keeping the fraction $w_c$ of each ballot, their vote total is equal to quota. As discussed next, the quota is also recalculated throughout the election and is dependent upon each candidate's keep weight, so there is an iterative process that repeats until each candidate's weight and the quota are at equilibrium.

Suppose $\succ_i= a\succ b \succ c$, $a$ and $b$ are elected, $w_a=.4$, $w_b=.5$, and $c$ is hopeful. Since each ballot has a value of 1, $a$ keeps .4 and passes the remaining .6 to $b$. However, since $b$ can only keep .5 of it, $b$ receives a value of .3 from $\succ_i$, passing the other .3 to $c$. Since $c$ is hopeful, it keeps the full .3.

The second relevant way that Meek differs from STV is that the quota is calculated differently and changes throughout the election. In Meek, $q= \frac{n-excess}{k+1}$. The numerator contains $excess$, meaning excess votes, or the sum of the values from each ballot that do not go to a candidate because no further candidates are available to keep that value. If the example above were changed so that $c$ were elected and $w_c=.5$, then $c$ would keep .15 and the remaining .15 would be considered excess.

\paragraph{Pessimistic $k$-Borda ($k$PM) and Optimistic $k$-Borda ($k$OM)}
The remaining rules are all based on the Borda score. The \emph{Borda score} of a candidate $c$ on ballot $\succ_i$ is defined as $m - r_i(c)$ \cite{elkind}. We use the notation $S(c)$, meaning candidate $c$'s cumulative Borda score across all ballots, and $S_i(c)$, meaning the Borda score $\succ_i$ gives to $c$. For example, if $m=4$ and $\succ_i=a\succ b\succ c \succ d$, then $S_i(a)=3$, $S_i(b)=2$, $S_i(c)=1$, and $S_i(d)=0$.

In Bloc and both STV variants, there is an implied procedure for the handling of partial ballots. In Bloc, a voter's top $k$ candidates receive 1 point while the rest receive 0. In STV, votes go into excess. However, there is no obvious way of dealing with partial ballots in Borda rules. We implement the ``optimistic'' and ``pessimistic'' models of Baumeister et al. \cite{baumeister}, which were originally used for single-winner Borda-count elections and later applied to the Chamberlin-Courant multiwinner voting rule, for example by Graham-Squire et al. \cite{graham}.

A \textbf{pessimistic Borda score function} can be defined as
\[
S ^{PM} _i (c) = 
\begin{cases}
    m-r_i(c) & \text{if voter } i \text{ ranks } c\\
    0 & \text{otherwise}
\end{cases}
\]
That is, the score voter $i$ gives $c$ is $c$'s Borda score based on their ranking, unless $c$ is not ranked, in which case the score is 0.

Alternatively, an \textbf{optimistic Borda score function} can be defined as
\[
S^{OM} _i (c) = 
\begin{cases}
    m-r_i(c) & \text{if voter } i \text{ ranks } c\\
    m-l_i-1 & \text{otherwise}
\end{cases}
\]
The difference here is that if $i$ does not rank $c$, $c$ can receive a score greater than 0. The optimistic model treats unranked candidates as if they were ranked next at the end of $i$'s ballot. In this paper we do not claim one is superior. Rather, we investigate the different strategic behaviors the two models encourage.

The $k$-Borda rule is a straightforward application of the single-winner Borda-count to the multiwinner realm. Here the pessimistic and optimistic variants are defined.

Formally, $c$'s score is
\[
S ^{k \text{PM}} (c) = \sum_{i=1}^n ( S ^{PM} _i (c) )
\qquad
S ^{k \text{OM}} (c) = \sum_{i=1}^n ( S ^{OM} _i (c) )
\]

For $k$PM and $k$OM, the sum of the Borda scores of each candidate across all ballots is calculated, and the $k$ candidates with the highest scores are elected. 

\subsubsection{Committee-based rules}
The next two voting rules, Chamberlin-Courant \cite{chamberlin} and Monroe \cite{monroe}, elect the committee as a unit rather than electing $k$ separate candidates. Voters are assigned representatives from each committee, but the rules differ in how the representatives are chosen.

\paragraph{Pessimistic and optimistic Chamberlin-Courant (CC PM and CC OM)}
For every potential committee, each voter is assigned to its favorite member and derives a utility value calculated as the Borda score of that candidate on their ballot. The committee that maximizes satisfaction, or utility, across all voters is elected. Let $G$ denote a possible committee (any subset of $C$ of size $k$), and let $V(G,c)$ denote the set of voters whose highest-ranked candidate from $G$ is $c$. If a voter ranks no candidates in $G$, they must be placed in $V(G,c)$ for exactly one $c \in G$. Under \textbf{CC PM} and \textbf{CC OM} the total score of $G$ is
\[
S ^{CCPM} (G) = 
\sum _ {c \in G}
\sum _ {i \in V(G,c)} 
\left( S ^{PM} _i (c) \right)
\qquad
S ^{CCOM} (G) = 
\sum _ {c \in G}
\sum _ {i \in V(G,c)}
\left( S^{OM} _i (c) \right)
\]

\paragraph{Pessimistic and optimistic Monroe (Monroe PM and Monroe OM)}
Under Monroe, each candidate from any potential committee $G$ must represent $1/k$ of the voters. Voters are assigned to their representatives in a way that maximizes total satisfaction, which is calculated using the Borda score, just like CC. However, unlike CC, a voter may not be represented by their favorite member of $G$. We avoid Monroe elections that have a number of ballots that cannot be cleanly divided between the candidates, so there is no need to address a procedure for when that is the case. Let $A(G)$ be the set of all possible assignments of voters to representatives in $G$, such that each $c \in G$ represents $n/k$ voters, and let $a(i)$ be voter $i$'s representative. Under \textbf{Monroe PM} and \textbf{Monroe OM} the total score of $G$ is
\[
S ^{Monroe \text{ } PM} (G) = 
\max _ {a \in A(G)}
\sum _ {i = 1} ^n
\left( S ^{PM} _i ( a(i) \right)
\qquad 
S ^{Monroe \text{ } OM} (G) = 
\max _ {a \in A(G)}
\sum _ {i = 1} ^n
\left( S ^{OM} _i ( a(i) \right)
\]
Under both Monroe and Chamberlin-Courant, the committee $G$ with the highest score is elected.

We use the notation $c(G)$ to represent the set of voters represented by candidate $c$ for committee $G$, where $c \in G$. 

Note that in any profile where all ballots are complete, for $k$-Borda, Chamberlin-Courant, and Monroe, the optimistic and pessimistic models will produce the same result since they only differ when there are partial ballots.

\subsection{Voter utility}
\label{subsec:utility}
One of the axioms in the paper is structured as follows: is it possible for a voter to improve their outcome by following this strategy? In order to answer this, we must define what it means for a voter to prefer one committee to another. Suppose there is an election in which $m=6$, $k=2$, and $\succ_1 = a \succ b \succ c \succ d \succ e \succ f$. Would voter $1$ prefer the winning committee to be $\{a,f\}$ or $\{b,d\}$? The only way to determine this truthfully is by asking voter $1$ which outcome they prefer, but the only information given about $1$ is their ordered ranking of the candidates, which may not even be complete. Here we propose two methods for solving this problem.

\paragraph{Best representative utility}
One can determine which committee gives a voter greater utility by comparing the candidate the voter ranks highest on their honest ballot from each committee: a voter is more satisfied with the committee from which they rank a candidate higher. If a voter's highest-ranked candidate is the same across multiple committees, then their second-highest-ranked candidates are compared, and so on. If voters rank no candidates from any of the committees being compared, they are considered indifferent. This is inspired by the Chamberlin-Courant rule which calculates voter satisfaction from their best representative.

\paragraph{Borda score utility}
Another way of determining which committee a voter prefers is to compare the cumulative Borda scores of the candidates in each committee. unranked candidates contribute a score of 0, like the pessimistic method. The voter is more satisfied with the committee that has a higher cumulative Borda score.

\medskip

Now, looking back at our example, $1$ prefers $\{a,f\}$ under best representative utility but prefers $\{b,d\}$ under Borda score utility, as the cumulative Borda score of $\{a,f\}$ is 5 and $\{b,d\}$ is 6.

It happens to be the case that in the entire paper there is no instance in which there is a voter that can improve their outcome under just one of the utility models and not both. Thus, we need not choose between them, or make the argument that one is superior. Accordingly, anytime we write that a voter improves their outcome, their outcome has improved in terms of both utility definitions.

One important detail to clarify is that voter 1's utility is calculated from $\succ_1$, not $\succ_1^*$. A voter's true feelings about the candidates are represented by $\succ_1$, so their satisfaction is based on $\succ_1$.

\subsection{Axioms}
\label{sec:properties}
Formalizations of each axiom, or property, and the strategies they connect to are described in this section.

\begin{definition}
\textbf{Append later-no-harm}. Let $W$ denote the winning committee in an election with ballot $\succ_i$, and let $W^*$ denote the winning committee in an election where all other ballots stay the same but $\succ_i$ changes to $\succ_i^*$. A voting rule satisfies append later-no-harm if, for every preference profile $P$ and every voter $i$, there exists no $\succ_i^*$ such that there is a candidate $c \in (W \setminus W^*) \cap C_i$, given that appended to the end of $\succ_i$ is at least one candidate $a \in C_i^* \setminus C_i$, and when all such candidates $a$ are removed from $\succ_i^*$, $\succ_i=\succ_i^*$.
\end{definition}

Put simply, a candidate that won originally should not become a loser when a voter appends additional candidates to their ballot below that original winner. \textit{Later-no-harm} and \textit{later-no-help} were defined by Woodall \cite{woodall1997} for single-winner elections. He defines later-no-harm as ``Adding a later preference to a ballot should not harm any candidate already listed'' and later-no-help as ``Adding a later preference to a ballot should not help any candidate already listed.'' He wrote these definitions so they would also apply to multiwinner elections. To clarify, Woodall intended, as do we, for the definition to allow for the appending of more than one candidate to a ballot.

If an election rule fails append later-no-harm, voters are encouraged to submit a truncated, or shortened, version of their honest ballot, as ranking many candidates may jeopardize the possibility of their favorite candidates being elected. To demonstrate this, consider this profile under $k$PM which demonstrates its failure of append later-no-harm and the related truncation strategy. As for the notation of preference profiles, each $\succ_i$ represents a ballot, with the highest candidate being ranked first. Let there be an original election with $\succ_1^H$, and an alternative election with $\succ_1^*$ replacing $\succ_1^H$ and all other ballots constant.

\begin{table}[h]
\centering
\setlength{\tabcolsep}{12pt}
\begin{tabular}{cccccc}
\multicolumn{6}{c}{$m=5$, $k=2$} \\ \hline
$\succ_1^H$&  $\succ_1^*$   &$\succ_2$   & $\succ_3$   & $\succ_4$  & $\succ_5$   \\ \hline
$a$ & $a$ & $a$ & $b$ & $c$ & $c$ \\
$b$ & $b$ & $b$ & $d$ & $b$ & $a$ \\
$c$ &     & $c$ &     & $e$ &     \\ \hline
\end{tabular}
\caption{$k$PM fails append later-no-harm.}
\label{tab:kPM_ap_harm}
\end{table}

In the original election in Table \ref{tab:kPM_ap_harm}, with $\succ_1^H$, the scores of each candidate are $a=11$, $b=13$, $c=12$, $d=3$, and $e=2$, so $W=\{b,c\}$. However, in the election with $\succ_1^*$ replacing $\succ_1^H$, $W^*=\{a,b\}$ as $c=10$ while all other candidates' scores stay the same. It is clear that truncating their ballot to exclude $c$ is an effective strategic decision by voter $1$, because according to $\succ_1^H$, they prefer $W^*$ to $W$ (under both utility models).

This example also proves that $k$PM fails append later-no-harm because $a$ is harmed by the appending of $c$ below itself. If a voting rule fails append later-no-harm, it will be susceptible to strategic ballot truncation of this type; voters will be able to cause the election of their highly ranked candidates by leaving lower ones off their ballot.

\begin{definition}
    \textbf{Reorder later-no-harm.} A voting rule satisfies reorder later-no-harm if, for every preference profile $P$ and every voter $i$, there exists no $\succ_i^*$ such that there is a candidate $c \in (W \setminus W^*) \cap C_i$, given that $\succ_i$ and $\succ_i^*$ are identical up to and including $c$, $C_i=C_i^*$, and after $c$ is ranked there exists at least one candidate $a$ such that $a \in C_i$ and $r_i(a) \neq r_i^*(a)$.
\end{definition}

In other words, reordering candidates on a ballot should not harm any candidate ranked above the reordered candidates. This axiom is another subdivision of Woodall's later-no-harm that focuses on a specific form of strategic behavior. 

The effect of strategic truncation is reducing the support for a candidate competing with a voter's more preferred candidates. In Table \ref{tab:kPM_ap_harm}, voter 1 reduced support for $c$ to help $a$. Another strategy that accomplishes the same thing is moving a candidate down on the ballot, rather than removing them. This is demonstrated in Table \ref{tab:kPM_re_harm}, under $k$PM. Note that all ballots are the same as in Table \ref{tab:kPM_ap_harm}, except $\succ_1^H$ and $\succ_1^*$ are now complete.

\begin{table}[h]
\centering
\setlength{\tabcolsep}{12pt}
\begin{tabular}{cccccc}
\multicolumn{6}{c}{$m=5$, $k=2$} \\ \hline
$\succ_1^H$&  $\succ_1^*$   &$\succ_2$   & $\succ_3$   & $\succ_4$  & $\succ_5$   \\ \hline
$a$ & $a$ & $a$ & $b$ & $c$ & $c$ \\
$b$ & $b$ & $b$ & $d$ & $b$ & $a$ \\
$c$ & $d$ & $c$ &     & $e$ &     \\ 
$d$ & $e$ &     &     &     &     \\ 
$e$ & $c$ &     &     &     &     \\ \hline
\end{tabular}
\caption{$k$PM fails reorder later-no-harm.}
\label{tab:kPM_re_harm}
\end{table}

The same mechanism occurs here; $W=\{b,c\}$, but $c$ loses 2 points as voter $1$ lowers them on their ballot, and $W^*=\{a,b\}$. Voter 1 is more satisfied when they submit $\succ_1^*$, making their strategy of lowering $c$ effective. This profile also proves that $k$PM fails reorder later-no-harm, as $a$ is harmed by the reordering of candidates below it. If a voting rule fails reorder later-no-harm, strategic manipulations like this will work.

\begin{definition}
\textbf{Append later-no-help.} A voting rule satisfies append later-no-help if, for every preference profile $P$ and every voter $i$, there exists no $\succ_i^*$ such that there is a candidate $c \in (W^* \setminus W) \cap C_i$, given that appended to the end of $\succ_i$ is at least one candidate $a \in C_i^* \setminus C_i$, and when all such candidates $a$ are removed from $\succ_i^*$, $\succ_i=\succ_i^*$.
\end{definition}

Simply, a candidate that lost originally should not become a winner when a voter appends additional candidates to their ballot below that original loser. To illustrate the strategy that is made possible by the failure of append later-no-help, consider the $k$OM election in Table \ref{tab:kOM_ap_help}.

\begin{table}[h]
\centering
\setlength{\tabcolsep}{12pt}
\begin{tabular}{ccccc}
\multicolumn{5}{c}{$m=5$, $k=2$} \\ \hline
$\succ_1^H$& $\succ_1^*$  &  $\succ_2$   & $\succ_3$  & $\succ_4$  \\ \hline
$a$ & $a$ &  $a$   & $b$ & $e$ \\
$b$ & $b$ &  $b$   & $e$ & $c$ \\
    & $c$ &  $e$   & $a$ & $b$  \\ 
    & $d$ &  $d$   & $d$ & $d$ \\
    & $e$ &  $c$   & $c$ & $a$  \\\hline
\end{tabular}
\caption{$k$OM fails append later-no-help.}
\label{tab:kOM_ap_help}
\end{table}

In the original election, with $\succ_1^H$, $W=\{b,e\}$ as $a=10$, $b=12$, $c=5$, $d=5$, and $e=11$. However, in the alternative election, with $\succ_1^*$, $W^*=\{a,b\}$ as $a=10$, $b=12$, $c=5$, $d=4$, and $e=9$. Voter 1 is more satisfied with $W^*$, making their strategic manipulation in $\succ_1^*$ effective. This is a strategy called burying: lowering the ranking of one candidate to help another \cite{green}. Here, $e$ is buried to help $a$. 

Additionally, this profile proves that $k$OM fails append later-no-help as $a$ is helped by the appending of additional candidates below it. Voting rules that fail append later-no-help are susceptible to burying.

\begin{definition}
    \textbf{Reorder later-no-help.} A voting rule satisfies reorder later-no-help if, for every preference profile $P$ and every voter $i$, there exists no $\succ_i^*$ such that there is a candidate $c \in (W^* \setminus W) \cap C_i$, given that $\succ_i$ and $\succ_i^*$ are identical up to and including $c$, $C_i=C_i^*$, and after $c$ is ranked there exists at least one candidate $a$ such that $a \in C_i$ and $r_i(a) \neq r_i^*(a)$.
\end{definition}

Put simply, reordering candidates on a ballot should not help any candidate ranked above the reordered candidates. Burying also works by reordering. Suppose in Table \ref{tab:kOM_ap_help} that $\succ_1^H=a \succ b \succ e \succ c \succ d$, and $\succ_1^*$ is the same as it is in Table \ref{tab:kOM_ap_help}. The outcome in the $\succ_1^H$ election and the $\succ_1^*$ election are exactly the same as they are in the original profile. Voter $1$ achieves the same outcome by burying $e$ through reordering rather than appending candidates, as $e$ goes from being ranked $3^{\text{rd}}$ to $5^{\text{th}}$. If a rule fails reorder later-no-help, burying by reordering will work.

\begin{definition}
    \textbf{No skipped payment -- Candidate-based rules.} 
    Let $W$ be the winning committee. A candidate-based voting rule satisfies no skipped payment if, for every preference profile $P$ and every voter $i$, for every pair of candidates $a,b$, given that $a\in W $ and $a\succ_i b$, if $i$ paid for $b$, then $i$ also paid for $a$. Voter $i$ is considered to have paid for candidate $c$ if $c$ is ranked on $\succ_i$, and

    \begin{itemize}
        \item (for STV rules) at least some fraction of $i$'s ballot is kept by $c$.

        \item (for $k$-Borda) $c$ is not the $m^{\text{th}}$ ranked candidate on $\succ_i$.

        \item (for Bloc) $c$ is ranked inside the top $k$ on $\succ_i$.
    \end{itemize}
\end{definition}

\begin{definition}
    \textbf{No skipped payment -- Committee-based rules.} 
    Let $W$ be the winning committee. A committee-based voting rule satisfies no skipped payment if, for every preference profile $P$ and every voter $i$, for every pair of candidates $a,b$, given that $a\in W $ and $a\succ_i b$, there is no committee containing both $a$ and $b$ from which $i$ is represented by $b$.
\end{definition}

Simply, if a voter contributes to a lower-ranked candidate, they must have also contributed to each higher-ranked candidate that is elected. We should note that we create this property specifically to outline differences between Meek and STV, and between Monroe and CC. The concept of payment does not fit particularly well for $k$-Borda and Bloc as voters contribute to every candidate they rank under $k$-Borda, with the exception of the $m^\text{th}$ ranked candidate, and similarly under Bloc they contribute to every candidate they rank in their top $k$ (there is no point in ranking more than $k$ candidates under Bloc, so we could say voters effectively contribute to every candidate they rank. We only give voters the option to rank more than $k$ to stay consistent with the rest of the rules). If a voter ranks all $m$ candidates, it is unfair to say that they paid for their least favorite candidate, but beyond that it is arbitrary to determine which ranked candidates a voter did or did not pay for under Bloc and $k$-Borda. Additionally, voters will contribute to every member of $W$ they rank under $k$-Borda, with the exception of the $m^\text{th}$ ranked candidate, and every member in the top $k$ under Bloc, but under the STV rules, Monroe, and CC this is not always true. We choose to include Bloc and $k$-Borda anyway so that the results are complete.

We create this axiom in response to a specific strategy in STV, first introduced by Woodall \cite{woodall1982}, and later named \textit{Woodall free riding} \cite{schulze}. In most regular variants of STV, an elected candidate cannot receive additional votes. These votes skip right over them, and go to the next hopeful candidate. Woodall free riding takes advantage of this in STV and is demonstrated in Table \ref{tab:scotSTV_skipped}.

\begin{table}[h]
\centering
\setlength{\tabcolsep}{12pt}
\begin{tabular}{cccccc}
\multicolumn{6}{c}{$m=4$, $k=2$} \\ \hline
$\succ_1^H$& $\succ_1^*$  &  $\succ_2$   & $\succ_{3,5}$  & $\succ_6$ & $\succ_{7,8}$ \\ \hline
$a$ & $d$ &  $a$  &  $a$  & $b$ & $c$ \\
$b$ & $a$ &  $b$  &       &     &     \\
$d$ & $b$ &       &       &     &     \\ \hline
\end{tabular}
\caption{STV fails no skipped payment. The notation $\succ_{3,5}$ means voters 3 through 5 all submit identical ballots.}
\label{tab:scotSTV_skipped}
\end{table}

Since there are 8 voters and $k=2$, the quota is 3. When voter 1 submits $\succ_1^H$, at the initial count $a=5$, $b=1$, $c=2$, and $d=0$. $a$ is elected, and a total of .8 votes transfer to $b$ as surplus. Thus, $b=1.8$, $c=2$, and $d=0$. $d$ is eliminated, $c$ is elected, and $W=\{a,c\}$. Alternatively, in the election with $\succ_1^*$, at the initial count $a=4$, $b=1$, $c=2$, and $d=1$. $a$ is elected and .25 votes transfer to $b$. Now, $b=1.25$, $c=2$, and $d=1$. $d$ is eliminated and 1 whole vote transfers to $b$, skipping $a$ on $\succ_1^*$. $b=2.25$ and $c=2$, so $b$ is elected and $W^*=\{a,b\}$. Voter $1$ is taking advantage of elected candidates not receiving transferred votes in STV. In the original election, only a fraction of 1's vote transfers to $b$ because it is coming from a  surplus. However, by putting a weak candidate $d$ above $a$, once $d$ has been eliminated and $a$ has already been elected, the whole vote skips $a$, and goes to $b$. Voter 1 pays for $b$ but skips paying for $a$, thus improving their outcome. This example shows that STV fails no skipped payment. This free riding strategy does not work in Meek STV because $a$ would keep part of 1's ballot after $d$ is eliminated (although in Meek, the winning committee is $\{a,b\}$ in both elections).

While this property is created for Woodall free riding, an STV-specific strategy, skipped payment can also happen with the committee-based rules. In these rules, voters only pay for one member of each committee -- their representative. In CC, this is always their highest-ranked member of the committee, but in Monroe it may not be. A voter can take advantage of this by following the strategy demonstrated in Table \ref{tab:MPM_MOM_skipped}, which works under optimistic and pessimistic Monroe as the profile contains only full ballots.

\begin{table}[h]
\centering
\setlength{\tabcolsep}{12pt}
\begin{tabular}{ccccc}
\multicolumn{5}{c}{$m=4$, $k=2$} \\ \hline
$\succ_1^H$& $\succ_1^*$  &  $\succ_2$   & $\succ_3$  & $\succ_4$\\ \hline
$a$ & $b$ &  $a$   & $a$ & $c$\\
$b$ & $a$ &  $d$   & $c$ & $b$ \\
$c$ & $c$ &  $c$   & $d$ & $d$\\ 
$d$ & $d$ &  $b$   & $b$ & $a$\\\hline
\end{tabular}
\caption{Monroe PM and OM fail no skipped payment.}
\label{tab:MPM_MOM_skipped}
\end{table}

In the original election, the highest-scoring committees are $\{a,c\}=11$ and $\{a,b\}=10$, so $\{a,c\}$ wins. $a(\{a,c\})=\{1,2\}$, meaning voters 1 and 2 are represented by $a$ for the committee $\{a,c\}$, $c(\{a,c\})=\{3,4\}$, $a(\{a,b\})=\{2,3\}$, and $b(\{a,b\})=\{1,4\}$. Since each candidate must represent an equal number of voters, the score of $\{a,b\}$ is maximized by having voter $1$ be represented by $b$. This makes sense as $a$ is ranked first by three voters, yet can only represent two. Since 1 ranks $b$ second, they are the best of the three voters to assign to $b$. If 1 knows they are likely to be assigned to $b$ for $\{a,b\}$, then they could increase $\{a,b\}$'s chances of winning by raising $b$ on their ballot as they do on $\succ_1^*$. In the election with $\succ_1^*$, $\{a,c\}=10$, $\{a,b\}=11$, and 1's representative from $\{a,b\}$ is still $b$. Thus, $\{a,b\}$ wins. Voter 1 improves their outcome, making their strategy successful. In the original election, Monroe fails no skipped payment, since voter 1 pays for $b$ yet not $a$, as $b$ is voter 1's representative for $\{a,b\}$, while $a\succ_1^Hb$.

\begin{definition}
    \textbf{Favorite betrayal.} Let $a$ be the first-ranked candidate on $\succ_i^H$, and let $b$ be the first-ranked candidate on $\succ_i^*$, where $a \neq b$. Let $W$ be the winning committee in the election with $\succ_i^H$, and let $W^*$ be the winning committee in the election with $\succ_i^*$. A voting rule satisfies favorite betrayal if, for every preference profile $P$ and every voter $i$, $i$ does not prefer $W^*$ to $W$ when $C_i=C_i^*$, and the only difference between $\succ_i^H$ and $\succ_i^*$ is that the positions of $a$ and $b$ are swapped.
\end{definition}

Simply, no voter should be able to improve their outcome by swapping a candidate with their honest favorite. The related strategy is straightforward: give additional support to a candidate by moving them to the very top of the ballot. Table \ref{tab:MPM_MOM_skipped} demonstrates an example of this as voter $1$ improves their outcome by moving $b$ to the top of their ballot.

Favorite betrayal has been studied on single-winner rules, and few satisfy it, especially ones with ranked ballots of the nature this paper looks at \cite{wolk,small}. With that in mind, we did not expect to find a multiwinner voting rule that satisfies it, and we did not. Authors normally define favorite betrayal such that voters can manipulate their ballot in any way, as long as they move a candidate above their favorite. This paper's definition is more restrictive, allowing voters to swap a candidate with their favorite and perform no further manipulations. Because every rule we study fails this more restricted definition, they also fail the standard definition.

We considered creating additional betrayal properties that are geared specifically to multiwinner elections. One idea was a property that states no voter should be able to reorder their top $k$ candidates and improve their outcome, and a second idea states no voter should be able to move a candidate from outside their top $k$ to inside their top $k$ and improve their outcome. We tested both and none of the voting rules satisfied either so we chose not to formally include them.

\section{Results}
\label{sec:results}

\begin{table}[h]
\centering

\renewcommand{\arraystretch}{1.5}
\setlength{\tabcolsep}{4pt}

\begin{tabularx}{\textwidth}{
    l
    >{\centering\arraybackslash}X
    >{\centering\arraybackslash}X
    >{\centering\arraybackslash}X
    >{\centering\arraybackslash}X
    >{\centering\arraybackslash}X
    >{\centering\arraybackslash}X
}
Rule
 & \makecell{Ap. later-\\no-harm}
 & \makecell{Re. later-\\no-harm}
 & \makecell{Ap. later-\\no-help}
 & \makecell{Re. later-\\no-help}
 & \makecell{No skipped\\payment}
 & \makecell{Favorite\\betrayal} \\ \hline

STV  & $\checkmark$ & $\checkmark$ & $\checkmark$ & $\checkmark$ & $\times$ & $\times$ \\
Meek STV      & $\checkmark$ & $\checkmark$ & $\checkmark$ & $\checkmark$ & $\checkmark$ & $\times$ \\
$k$PM         & $\times$ & $\times$ & $\checkmark$ & $\times$ & $\checkmark$ & $\times$ \\
$k$OM         & $\checkmark$ & $\times$ & $\times$ & $\times$ & $\checkmark$ & $\times$ \\
CC PM         & $\times$ & $\times$ & $\checkmark$ & $\times$ & $\checkmark$ & $\times$ \\
CC OM         & $\checkmark$ & $\times$ & $\times$ & $\times$ & $\checkmark$ & $\times$ \\
Monroe PM     & $\times$ & $\times$ & $\times$ & $\times$ & $\times$ & $\times$ \\
Monroe OM     & $\times$ & $\times$ & $\times$ & $\times$ & $\times$ & $\times$ \\ 
Bloc          & $\times$! & $\times$! & $\checkmark$ & $\times$! & $\checkmark$ & $\times$! \\ \hline

\end{tabularx}

\caption{Summary of results. $\checkmark$ means the voting rule satisfies the axiom and $\times$ means it fails. ! means that the rule fails only with a tie.}
\label{tab:results}
\end{table}

Before going through each property and proving whether the voting rules satisfy or fail them, we introduce Proposition \ref{prop:1}, which applies to several proofs.

\begin{proposition}
    Under STV and Meek STV, candidates are unaffected by ballot manipulations that occur below them. \label{prop:1}
\end{proposition}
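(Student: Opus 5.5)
The plan is to make the statement precise and then prove it by induction on the stages of the count. I will read ``unaffected'' as follows. Fix a voter $i$, a candidate $c \in C_i$, and two ballots $\succ_i$ and $\succ_i^*$ that are identical up to and including $c$ and may differ arbitrarily afterwards (by appending or by reordering). All other ballots are fixed. The claim is then that $c$ is elected in the election with $\succ_i$ if and only if it is elected in the election with $\succ_i^*$.

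To make ``identical'' meaningful in the presence of ties, I will couple the random tie-breaks of the two elections, so that the same tie-breaking outcomes are used whenever the tie situations coincide. I will then prove a stronger invariant. As long as $c$ is hopeful (neither elected nor eliminated), the two elections have identical histories: the same sets of elected and eliminated candidates, the same vote totals for every candidate, and the same quota.

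For Scottish STV, the key observation concerns where ballot $i$'s value can sit. While $c$ is hopeful, every piece of that value is held by $c$ or by a candidate ranked above $c$ on $\succ_i$. Transfers only happen in two ways: as a surplus when a candidate is elected, or as a whole vote when a candidate is eliminated. In either case the value moves to the next \emph{hopeful} candidate on the ballot. Since $c$ itself is hopeful, no such transfer can pass beyond $c$.

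The induction then runs round by round. Suppose the states of the two elections agree at the start of a round in which $c$ is still hopeful. Every ballot other than $i$ is the same in both elections. Ballot $i$ agrees on the prefix that currently matters. So the totals agree, and hence the decision to elect a candidate reaching quota, or to eliminate the lowest candidate, agrees as well. The transfers performed in that round also agree: they depend only on totals and on the ballot prefixes up to the next hopeful candidate. By the observation above, for ballot $i$ that prefix is contained in the common part up to $c$.

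The invariant therefore persists until the first round in which $c$'s status changes. That round is identical in both elections, so $c$ is elected in both or eliminated in both. The same holds if the count stops because the number of remaining candidates equals the number of empty seats, since that condition is also part of the identical state. Whatever happens to candidates below $c$ after this point is irrelevant to the statement.

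For Meek STV I will use the same invariant, but I need to handle the fixed-point computation of keep weights and of the quota $q=\frac{n-excess}{k+1}$. While $c$ is hopeful, $w_c=1$. Hence on ballot $i$ the candidate $c$ keeps everything that reaches it, and nothing flows to candidates below $c$ or into excess. Consequently, for any given vector of keep weights, the contributions of ballot $i$ to every candidate's total and to $excess$ coincide under $\succ_i$ and $\succ_i^*$.

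From this, the iteration map that updates the weights and the quota is literally the same function in both elections. Starting from the same weights, it produces the same equilibrium, and thus the same election or elimination decision at each stage. Induction on stages then gives the claim exactly as in the STV case.

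The main obstacle I expect is this Meek step. I must argue that the iterative procedure is determined entirely by the per-ballot flow map, so that equality of the maps yields equal outputs. This sidesteps any uniqueness question, because both elections run the identical deterministic iteration. A second point to check is the Scottish rules for surplus transfers, which use ``next available preference''. I need to confirm that this never skips a hopeful candidate, which is what the observation for STV relies on.
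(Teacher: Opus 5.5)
Your proposal is correct and takes essentially the same route as the paper. The paper's one-line argument is that no part of ballot $i$ can reach candidates below $c$ until $c$ has been elected or eliminated (in Meek this holds because $w_c=1$ while $c$ is hopeful), and that this status is permanent. Your identical-state induction, the coupling of tie-breaks, and the observation that Meek's keep-weight/quota iteration is the same map in both elections make that argument rigorous.
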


\begin{proof}
    Suppose there is a ballot $\succ_i$ and a candidate $c \in C_i$. According to the definitions of both STV rules, the ballot or any fraction of the ballot can only transfer to the candidate ranked below $c$ once $c$ has been elected or eliminated. Since the ballot only reaches candidates ranked below $c$ once $c$'s fate has already been decided, any ballot manipulation strictly below $c$ cannot affect $c$'s outcome. Note that under Meek $c$'s score can change after they are elected, but they will always remain elected.
\end{proof}

\subsection{Append later-no-harm}

\begin{proposition}
    STV, Meek STV, CC OM, and $k$OM satisfy append later-no-harm.
\end{proposition}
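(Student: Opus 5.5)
The plan is to treat the two STV variants together using Proposition \ref{prop:1}, and then to handle $k$OM and CC OM by a direct score comparison. The key observation for the Borda-based rules is that appending candidates to the end of $\succ_i$ never changes the score of anything "anchored" in $C_i$, and never increases the score of anything else.

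\emph{STV and Meek STV.} Fix $c \in W \cap C_i$. Every appended candidate $a \in C_i^* \setminus C_i$ is placed below every candidate of $C_i$, so in particular below $c$. The change from $\succ_i$ to $\succ_i^*$ is therefore a ballot manipulation occurring strictly below $c$. By Proposition \ref{prop:1}, $c$'s fate is unaffected, so $c$ is still elected and $c \in W^*$. Hence $(W\setminus W^*)\cap C_i=\emptyset$.

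One subtlety in Meek is that appending candidates can reduce the excess, and so change the quota. I will rely on Proposition \ref{prop:1} as stated to cover this. If I want extra care, I would remark that $c$ is elected or eliminated at a stage where the ballot has not yet passed beyond $c$, so the appended candidates have not yet absorbed any weight that could alter the excess before $c$'s fate is decided.

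\emph{$k$OM.} I compare $S^{OM}_i$ before and after the change.
\begin{itemize}[nosep]
\item For $d \in C_i$ the rank $r_i(d)$ is unchanged, so $S^{OM}_i(d)$ is unchanged.
\item The first appended candidate receives $m-(l_i+1)=m-l_i-1$, which is exactly its old optimistic score. Later appended candidates receive strictly less than this.
\item Candidates that remain unranked receive $m-l_i^*-1 \le m-l_i-1$.
\end{itemize}
So $S^{k\text{OM}}(d)$ is unchanged for $d\in C_i$ and weakly decreases for every other $d$, with all other ballots held fixed. A candidate $c\in W\cap C_i$ keeps its total, while no other candidate gains. The number of candidates scoring strictly above $c$ therefore cannot increase, and $c$ remains among the top $k$.

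\emph{CC OM.} I argue committee by committee.
\begin{itemize}[nosep]
\item If $G\cap C_i\neq\emptyset$, voter $i$'s best member of $G$ on $\succ_i^*$ is the same candidate of $C_i$ as before, since appended candidates lie below all of $C_i$. Its score is unchanged, so $S^{CCOM}(G)$ is unchanged.
\item If $G\cap C_i=\emptyset$, voter $i$ previously contributed $m-l_i-1$. Now they contribute either the score of an appended candidate or the new unranked score, and both are at most $m-l_i-1$.
\end{itemize}
Thus every committee's score weakly decreases, and committees meeting $C_i$ keep their score exactly. The winning $W$ contains $c\in C_i$, so $S^{CCOM}(W)$ is unchanged while no competitor rises. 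Hence $W$ remains optimal and $c$ is not removed.

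The main obstacle is ties, since winners are broken randomly. In $k$OM and CC OM the argument only shows that $c$, respectively $W$, remains among the maximizers. I would handle this by noting that nothing that was weakly behind moves strictly ahead, so the set of tied alternatives containing $c$ can only shrink. I would then state the conclusion in the paper's convention: a candidate who won does not become a loser due to the appending.
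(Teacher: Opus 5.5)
Your proposal is correct and follows essentially the same route as the paper: Proposition~\ref{prop:1} handles the two STV rules, and for $k$OM and CC OM you use the same observation. Candidates in $C_i$ (respectively, committees meeting $C_i$) keep their scores exactly, while every other candidate (committee) weakly loses points.

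Your extra remark on ties goes beyond the paper, which ignores them. One correction: for $k$OM the set of candidates tied with $c$ can actually grow, since a candidate previously above $c$ may drop to $c$'s score. Because the set of candidates scoring at least $S(c)$ only shrinks, $c$'s chance of winning still cannot decrease.
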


\begin{proof}
    By Proposition 1, since candidates in STV and Meek STV are unaffected by ballot manipulations that occur strictly below them, these candidates can never be harmed by the appending of additional candidates below them. Therefore both rules satisfy append later-no-harm.

    Now consider CC OM. Let $P$ be any preference profile, let $i$ be a voter in $P$ with the ballot $\succ_i$, and let $W$ be the winning committee of the election with $\succ_i$. A voting rule can only fail append later-no-harm in an election in which $i$ ranks a winning candidate. Otherwise, there is no candidate to harm. Let $\succ_i^*$ be voter $i$'s ballot after appending at least one candidate to the end, and let $a$ be any appended candidate. Let $X$ be any committee containing a candidate ranked by $i$ and let $Y$ be any committee that contains no such candidate. 
    
    For every $X$, appending additional candidates to $\succ_i$ will never change $i$'s contribution to $X$'s score because under CC a voter is represented by their highest-ranked member of the committee. For every $Y$, appending additional candidates to $\succ_i$ will never increase $i$'s contribution to $Y$'s score. Originally, $S_i(Y)=m-l_i-1$ points. But, if none of the appended candidates belong to $Y$, $Y$ may now receive fewer points as $S_i(Y)=m-l_i^*-1$ where $l_i^*>l_i$, or if one of the appended candidates $a$ belongs to $Y$, $Y$ may now receive the same or fewer points as $S_i(Y)=m-r_i(a)$, where $r_i(a) \geq l_i-1$.

    If $i$ ranks a member of $W$, $W$ belongs to the class of committees $X$. Since the scores of all committees $X$ are constant, and the scores of all committees $Y$ are constant or decrease, $W$ must stay as the winner in the election with $\succ_i^*$. Therefore, no winning candidate can be harmed by the appending of additional candidates below it, so CC OM satisfies later-no-harm.

    As for $k$OM, the score of a candidate $c$ ranked on $\succ_i$ will be unaffected by the appending of additional candidates as $S_i(c)$ is $c$'s Borda score. An appended candidate $a$ will either have their score stay the same or decrease as $S_i(a)=m-r_i(a)$, where $r_i(a) \geq l_i-1$. Lastly, a candidate $d$ not ranked on $\succ_i$ or $\succ_i^*$ will have their score decrease as $S_i(d)=m-l_i^*-1$ where $l_i^*>l_i$. Since the score of a winning candidate on $\succ_i$ will never change, and the scores of all losing candidates will never increase, a voter will never cause the loss of an originally winning candidate by appending additional candidates below them. Therefore $k$OM satisfies append later-no-harm.
    
\end{proof}

\begin{proposition}
    $k$PM, CC PM, Monroe PM, Monroe OM, and Bloc fail append later-no-harm.
\end{proposition}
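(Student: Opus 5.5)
The plan is to give one explicit counterexample profile per rule (or per pair of rules). Each profile follows the template of Table~\ref{tab:kPM_ap_harm}: voter 1 submits $\succ_1$, then $\succ_1^*$ obtained by appending candidates to the end. We exhibit some $c\in(W\setminus W^*)\cap C_1$, verifying the winner by listing the relevant committee or candidate scores. For $k$PM nothing new is needed: Table~\ref{tab:kPM_ap_harm} already shows $a$ being harmed when $c$ is appended, with $W=\{b,c\}$ for the longer ballot and $W^*=\{a,b\}$ for the truncated one. Read in the other direction, appending $c$ to $a\succ b$ takes $a$ out of the committee.

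For CC PM, I would exploit the mechanism that the proof for CC OM ruled out. If voter $i$ ranks no member of a committee $Y$, then under the pessimistic model $i$ contributes $0$ to $Y$. Appending a member of $Y$ raises that contribution to $m-r_i^*(\cdot)>0$, while every committee $X$ containing a ranked candidate keeps its score. So I would build a small profile with $m=4$ or $5$ and $k=2$ as follows. Voter 1 ranks only $a$. The winning committee $W=\{a,b\}$ beats $Y=\{c,d\}$ by less than $m-2$. Appending $c$ to voter 1's ballot then lets $\{c,d\}$ overtake it. The remaining step is to check that no third committee interferes; I would choose the other ballots so that only these two committees are competitive.

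For Monroe PM and Monroe OM, the key feature is that a voter may be represented by someone other than their top-ranked committee member. Take a profile in which voter 1 ranks $a\in W$ but, because of the $n/k$ balancing constraint, is assigned in $W$ to some member $d$ that voter 1 does not rank, as in the style of Table~\ref{tab:MPM_MOM_skipped}.
\begin{itemize}
\item Under Monroe OM, voter 1 currently gives $d$ the score $m-l_1-1$. Appending some other candidate $e$ followed by $d$ lowers $d$'s score by one, so $W$'s score drops. Meanwhile, a rival committee in which voter 1 is represented by a ranked candidate is unaffected.
\item Under Monroe PM, I would instead append a member of the rival committee that voter 1 ends up being assigned to, raising the rival's score as in the CC PM case.
\end{itemize}
Ideally one profile with only complete ballots for the other voters serves both variants, but separate tables are acceptable. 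I expect this to be the main obstacle. The optimal assignment must be recomputed for every competitive committee after the change, and one must confirm that the rival becomes strictly best.

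For Bloc, the table records failure only with a tie, so the plan is as follows. Voter 1 ranks fewer than $k$ candidates, including a winner $a$. Appending a candidate $e$ within the top $k$ gives $e$ one extra point, bringing $e$ level with $a$ for the last seat. Under random tie-breaking, $a$ is then possibly not elected. A short remark would explain why no strict counterexample exists: appending never lowers any already-ranked candidate's score, so only a tie can displace $a$.
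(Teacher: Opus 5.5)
Your strategy matches the paper's: explicit counterexamples. You take $k$PM from Table~\ref{tab:kPM_ap_harm}. For the pessimistic rules you use the fact that a committee voter~1 does not rank gets $0$ until one of its members is appended; the paper covers CC PM and Monroe PM with one profile, where appending $b$ to the ballot $a$ lifts $\{b,c\}$ from $32$ to $34$, past $\{a,c\}=33$. For Monroe OM you use the fact that voter~1 can be assigned in $W$ to a candidate they do not rank, and Bloc fails only through a tie. Your remark that no strict Bloc counterexample exists is correct, and the paper does not prove it. For CC PM you also need no third-committee check: every committee containing $a$ keeps its score, so once the rival exceeds $S(W)$, $a\notin W^*$.

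The gap is in the Monroe OM step, which cannot work as written. Under the optimistic model, appending never raises any of voter~1's scores:
\begin{itemize}
\item originally ranked candidates keep their scores;
\item a candidate appended at position $l_1+j$ goes from $m-l_1-1$ to $m-l_1-j$;
\item a candidate left unranked drops by the number of candidates appended.
\end{itemize}
So no committee's Monroe OM score can increase. Your modification (append $e$, then $d$) lowers voter~1's score for their $W$-representative $d$ by exactly one. Keeping $W$'s old assignment then gives $S^*(W)\ge S(W)-1$. By integrality, every other committee $G$ satisfies $S^*(G)\le S(G)\le S(W)-1$. Hence $W$ stays at least tied for first, and the rival can never become strictly best, which is exactly what you said must be confirmed. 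At most you would get a Bloc-style tie failure, which is weaker than what the paper establishes.

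You need voter~1's score for their representative to fall by at least $2$, for example by appending two candidates and leaving $d$ unranked. This is the paper's middle profile in Table~\ref{tab:ap_harm}. Voter~1 goes from $a$ to $a\succ b\succ c$, and their score for $e$ (their representative in $\{a,e\}$) falls from $3$ to $1$. So $\{a,e\}$ drops from $21$ to $19$, while $\{b,c\}$, where voter~1 is represented by $b$ with unchanged score $3$, stays at $20$.

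More generally, for a failure result the profile is the proof, and you exhibit none for CC PM, Monroe PM, Monroe OM, or Bloc. The CC PM and Bloc constraints you state are easy to meet. For Monroe, though, you stop exactly at the step you flag as the main obstacle.
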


\begin{proof}
\begin{table}[h]
\centering
\setlength{\tabcolsep}{6pt}

\begin{tabular}{ccccc||ccccccc||cccc}

\multicolumn{5}{c||}{$m=4$, $k=2$}
&
\multicolumn{7}{c||}{$m=5$, $k=2$}
&
\multicolumn{4}{c}{$m=3$, $k=2$}
\\ \hline

$\succ_{1,5}$ & $\succ_1^*$ & $\succ_{6,9}$ & $\succ_{10,15}$ & $\succ_{16}$
&
$\succ_1$ & $\succ_1^*$ & $\succ_2$ & $\succ_3$ & $\succ_4$ & $\succ_5$ & $\succ_6$
&
$\succ_{1,3}$ & $\succ_1^*$ & $\succ_{4,5}$ & $\succ_{6,9}$
\\ \hline

$a$ & $a$ & $b$ & $c$ & $d$
&
$a$ & $a$ & $c$ & $a$ & $e$ & $b$ & $e$
& 
$a$ & $a$ & $b$ & $c$
\\

& $b$ & & & $b$
& 
& $b$ & $a$ & $c$ & $c$ & $a$ & $b$
&
& $b$ & &
\\

& & & &
&
& $c$ & $d$ & $b$ & $d$ & $d$ & $d$
&
& & &
\\

& & & &
&
& & $b$ & & & $c$ &
&
& & &
\\ \hline

\end{tabular}

\caption{Profiles proving failure of append later-no-harm. Left is CC PM and Monroe PM, middle is Monroe OM, and right is Bloc.}
\label{tab:ap_harm}
\end{table}

A profile proving $k$PM fails append later-no-harm is provided in Table \ref{tab:kPM_ap_harm}.

In Table \ref{tab:ap_harm}, in the left profile under both CC PM and Monroe PM, in the original election, $\{a,c\}=33$ and $\{b,c\}=32$. Therefore $W=\{a,c\}$. Under Monroe PM the representatives are $a(\{a,c\})=\{1,\ldots,8 \}$, $c(\{a,c\})=\{9, \ldots,16 \}$, $b(\{b,c\})=\{1, 2, 3, 6, \ldots, 9, 16\}$, and $c(\{b,c\})=\{4, 5, 10, \ldots , 15 \}$. In the election with $\succ_1^*$, under CC PM and Monroe PM, $\{a,c\}=33$ and $\{b,c\}=34$. Therefore $W^*=\{b,c\}$. The representatives do not change under Monroe PM. $a$ is harmed by voter 1 appending $b$. 

In the middle profile under Monroe OM, $\{a,e\}=21$ and $\{b,c\}=20$. Therefore $W=\{a,e\}$. The representatives are $a(\{a,e\})=\{2,3,5\}$, $e(\{a,e\})=\{1,4,6\}$, $b(\{b,c\})=\{1,5,6\}$, and $c(\{b,c\})=\{2,3,4\}$. In the election with $\succ_1^*$, $\{a,e\}=19$ and $\{b,c\}=20$. Therefore $W^*=\{b,c\}$. The representatives do not change. $a$ is harmed by voter 1 appending $b$ and $c$. 

In the right profile under Bloc, $W=\{a,c\}$ while there is a tie between $\{a,c\}$ and $\{b,c\}$ for $W^*$. Assuming random tie-breaking, there is a 50\% chance $a$ is harmed by appending $b$. 
\end{proof}

\subsection{Reorder later-no-harm}

\begin{proposition}
    STV and Meek STV satisfy reorder later-no-harm.
\end{proposition}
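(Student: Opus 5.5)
The plan is to reduce the statement to Proposition \ref{prop:1}, in the same way the append later-no-harm result was handled for the two STV rules. Fix a profile $P$, a voter $i$, and a manipulated ballot $\succ_i^*$ satisfying the hypotheses of reorder later-no-harm. Then there is a candidate $c \in (W\setminus W^*)\cap C_i$ such that $\succ_i$ and $\succ_i^*$ agree in every position up to and including $c$, $C_i=C_i^*$, and only candidates ranked strictly after $c$ have different positions. The goal is to show that no such $c$ can exist. Equivalently, every candidate ranked by $i$ at or above the first changed position keeps the same fate.

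The key step is to check that a reordering strictly below $c$ is a ``ballot manipulation that occurs below'' $c$ in the sense of Proposition \ref{prop:1}. I would argue this directly from the transfer mechanics. Consider any fraction of $\succ_i$ that is currently held by, or passing through, a candidate ranked at or above $c$. The recipient of that fraction is determined only by the prefix of the ballot up to $c$, together with the current elected/eliminated status of candidates. That prefix is identical on $\succ_i$ and $\succ_i^*$.

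Under STV, any part of the ballot reaches a position below $c$ only after $c$ has been elected or eliminated. Under Meek STV, value is passed below $c$ only after $c$ is elected (keep weight less than $1$) or eliminated (keep weight $0$). So, following Proposition \ref{prop:1}, $c$'s status is fixed before the reordered part of the ballot can matter. In particular, if $c \in W$ under $\succ_i$, then at the moment $c$ is elected the two elections have evolved identically as far as $i$'s ballot is concerned. Hence $c$ is also elected under $\succ_i^*$, and under Meek it remains elected thereafter.

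The main obstacle is making ``evolved identically up to $c$'s decision'' rigorous. The tally concerns all candidates, so I must confirm that, before $c$'s fate is settled, voter $i$'s reordered suffix contributes nothing to any candidate. This holds because no part of $\succ_i$ reaches the suffix before $c$ is elected or eliminated. For Meek there is an extra wrinkle: the quota depends on the excess, and the keep weights are computed iteratively. I would handle this by noting that before $c$ is elected or eliminated, the whole of $i$'s ballot value stops at or above $c$. So it contributes nothing to the excess and nothing to any suffix candidate, and the quota and keep weights agree in both elections up to that point.

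With that established, applying Proposition \ref{prop:1} directly gives that no $c\in W\cap C_i$ ranked above the reordered candidates can be harmed. Therefore both STV and Meek STV satisfy reorder later-no-harm.
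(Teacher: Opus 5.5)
Your proposal is correct and follows the paper's argument. The paper's proof is a one-line appeal to Proposition~\ref{prop:1}, and your extra checks are exactly the details that appeal leaves implicit: while $c$ is hopeful, no part of voter $i$'s ballot reaches the reordered suffix, so in Meek STV the excess, quota and keep weights coincide until $c$'s fate is fixed, and $c$ stays elected afterward. One small slip: your ``equivalently'' restatement should refer to candidates strictly above the first changed position, since the candidate occupying that position can change fate.
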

\begin{proof}
    By Proposition 1, candidates can never be affected, and thus harmed, by the reordering of candidates below them in STV and Meek STV. Therefore both rules satisfy reorder later-no-harm.
\end{proof}

\begin{proposition}
    $k$PM, $k$OM, CC PM, CC OM, Monroe PM, Monroe OM, and Bloc fail reorder later-no-harm.
\end{proposition}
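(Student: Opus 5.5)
The failure claims are existential, so I will prove them by explicit counterexample profiles, one per rule (or per pair of rules), as the paper does for append later-no-harm. To cut the work, I will use profiles in which \emph{every} ballot, including $\succ_1$ and $\succ_1^*$, is complete. Then, as noted at the end of Section~\ref{subsec:votingrules}, the optimistic and pessimistic versions coincide. A single profile therefore settles $k$PM and $k$OM together, another settles CC PM and CC OM, and another settles Monroe PM and Monroe OM.

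For $k$PM and $k$OM I would reuse Table~\ref{tab:kPM_re_harm}. Voter 1 already submits complete ballots there, and I would complete ballots 2 through 5 arbitrarily, e.g.\ by appending the missing candidates in alphabetical order. I would then recompute the Borda totals and check that $W=\{b,c\}$ with $\succ_1^H$ and $W^*=\{a,b\}$ with $\succ_1^*$. If the padding destroys the margins, I would adjust it by placing $a$ low and $c$ relatively high on the padded ballots. The mechanism needs no tuning: $c$ loses two points when voter 1 moves it from third to fifth, while $a$ and $b$, ranked above the reordered part, are unchanged.

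For CC, reordering strictly below a candidate $a\in W$ that voter 1 ranks does not change voter 1's contribution to any committee containing a candidate ranked at or above $a$. Committees containing $a$ therefore keep their scores. Harm must instead come from \emph{raising} some committee $G$ that avoids $a$ and voter 1's other top choices, so that voter 1 is represented in $G$ by a candidate $d$ below $a$. The construction is:
\begin{itemize}
\item take $\succ_1^H = a\succ b\succ c\succ d$ and $\succ_1^* = a\succ b\succ d\succ c$, with $m=4$ and $k=2$;
\item design the other complete ballots so that $\{a,x\}$ beats $\{c,d\}$ (or a similar committee avoiding $a,b$) by exactly one point in the original election;
\item the swap gives $d$ one more point from voter 1 as representative, producing a tie or a win for the $d$-committee. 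I will add a duplicated voter block, or use margin zero versus one, so that the win is strict.
\end{itemize}
For Monroe I would first test the same profile, with $n$ divisible by $k$, and check that the optimal assignment puts voter 1 under $d$ in the rival committee both before and after the swap. If the balancing constraint prevents this, I will build a separate profile in the style of Table~\ref{tab:MPM_MOM_skipped}, where voter 1 is forced onto a lower-ranked representative whose position they then raise by a reordering below $a$.

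For Bloc the marker ``!'' in Table~\ref{tab:results} says failure occurs only through ties, so I aim for a tie. With $k=2$, take $\succ_1^H=a\succ c\succ b$ and $\succ_1^*=a\succ b\succ c$, so $b$ gains a point and $c$ loses one. The other voters should make $W=\{a,b\}$ strictly only in the sense that the relevant committee shifts, and afterwards $a$ ties with $b$ or $c$ for a seat; random tie-breaking then harms $a$ with positive probability, mirroring the append later-no-harm Bloc argument.

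I expect the main obstacle to be the Monroe case. Its assignment is a global optimization, so I must verify that the optimal assignments do not shift in a way that cancels the score change when voter 1 reorders. Where the CC profile fails for Monroe, I would try a small instance with $n=4$, $k=2$ and enumerate the assignments by hand.
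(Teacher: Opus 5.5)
Your overall approach, explicit counterexamples with every ballot complete so that the optimistic and pessimistic variants coincide, is the paper's. But your Chamberlin--Courant construction cannot work. Your structural remark is correct: reordering below the harmed candidate leaves voter~1's contribution unchanged for every committee containing a candidate ranked at or above it. So the original winner's score is fixed, and all movement must come from a rival committee whose best-ranked member (for voter~1) rises.

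Your concrete instance defeats this. Take $m=4$, $k=2$, $\succ_1^H=a\succ b\succ c\succ d$ and $\succ_1^*=a\succ b\succ d\succ c$. The only committee avoiding $a$ and $b$ is $\{c,d\}$. Voter~1's best member of it sits in position~3 both before and after (first $c$, then $d$), so voter~1 gives it $1$ point either way. Every committee's CC score is unchanged, and the outcome does not move at all, not even to a tie. ``$d$ gains a point as representative'' is a Monroe-style effect; under CC the representative is always the best-ranked member.

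Your proposed fixes do not help. Scores are integers and the winner's score is frozen, so a strict reversal needs the rival's best member to rise by at least two positions. That is impossible for $m=4$, $k=2$: a two-member committee avoiding voter~1's top choice has its best member in position $2$ or $3$. Duplicating other voters only rescales margins. A margin of zero puts the tie in the original election rather than producing harm. You need $m\ge 5$ and a lift of two or more places. The paper uses $m=6$ and $a\succ b\succ c\succ d\succ e\succ f\to a\succ b\succ f\succ d\succ e\succ c$, which raises voter~1's contribution to $\{e,f\}$ from $1$ to $3$ while $\{a,b\}$ leads by one point.

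The other parts are plans rather than proofs, and three need correcting.

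\emph{Monroe.} An adjacent swap changes each fixed assignment's value by at most one, hence changes each committee's score by at most one. Reusing your swap therefore gives a strict reversal only if voter~1 is assigned to the demoted $c$ in the winning committee \emph{and} to the promoted $d$ in the rival. Otherwise you again need a two-place lift of voter~1's representative in the rival, as in the paper's $a\succ b\succ c\succ d\succ e\to a\succ d\succ c\succ b\succ e$.

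\emph{Bloc.} The original committee cannot be $\{a,b\}$, since then $a$ cannot be pushed out. Take it to be $\{a,c\}$ with $b$ one point behind $a$. For example, with $m=3$, let voter~1 go from $a\succ c\succ b$ to $a\succ b\succ c$ and add two voters ranking only $c$. The scores move from $(a,b,c)=(1,0,3)$ to $(1,1,2)$, creating the $a$--$b$ tie.

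\emph{$k$-Borda.} The alphabetical padding does break the example: it gives $a=14$, $b=15$, $c=13$. The completion $\succ_2=a\succ b\succ c\succ d\succ e$, $\succ_3=b\succ d\succ c\succ a\succ e$, $\succ_4=c\succ b\succ e\succ a\succ d$, $\succ_5=c\succ a\succ b\succ d\succ e$ works: originally $a=13$, $b=15$, $c=14$, and after the reordering $c=12$. Note that $a$ is the harmed candidate only when the pair is read from $a\succ b\succ d\succ e\succ c$ to $a\succ b\succ c\succ d\succ e$. The reorder definition is symmetric in the two ballots, so this is permitted, but you should say so.
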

\begin{proof}
    
\begin{table}[h]
\centering
\setlength{\tabcolsep}{4pt}

\begin{tabular}{cccccc||ccccc||cccccc}

\multicolumn{6}{c||}{$m=6$, $k=2$}
&
\multicolumn{5}{c||}{$m=5$, $k=2$}
&
\multicolumn{6}{c}{$m=6$, $k=3$}
\\ \hline

$\succ_1$ & $\succ_1^*$ & $\succ_2$ & $\succ_{3,12}$ & $\succ_{13,17}$ & $\succ_{18,27}$
&
$\succ_1$ & $\succ_1^*$ & $\succ_2$ & $\succ_3$ & $\succ_4$
&
$\succ_1$ & $\succ_1^*$ & $\succ_{2,4}$ & $\succ_{5,7}$ & $\succ_{8,12}$ & $\succ_{13,17}$
\\ \hline

$a$&$a$ &$f$ &$a$ &$e$ &$b$
&
$a$&$a$ &$a$ &$b$ &$e$
&
$a$&$a$ &$a$ &$d$ &$e$ &$f$
\\

$b$&$b$ &$e$ &$f$ &$d$ &$f$
&
$b$&$b$ &$e$ &$a$ &$b$
&
$b$& $b$& & & &
\\

$c$&$f$ &$c$ &$c$ &$c$ &$d$
&
$c$&$e$ &$d$ &$e$ &$d$
&
$c$&$d$ & & & &
\\

$d$&$d$ &$b$ &$d$ &$f$ &$a$
&
$d$&$d$ &$c$ &$c$ &$c$
&
$d$&$c$ & & & &
\\

$e$&$e$ &$a$ &$b$ &$a$ &$c$
&
$e$&$c$ &$b$ &$d$ &$a$
&
& & & & &
\\ 

$f$&$c$ &$d$ &$e$ &$b$ &$e$
&
&&&&
&
& & & & &
\\ \hline

\end{tabular}

\caption{Profiles proving failure of reorder later-no-harm. Left is CC PM and CC OM, middle is $k$PM and $k$OM, right is Bloc.}
\label{tab:re_harm_1}
\end{table}

\begin{table}[h]
\centering
\setlength{\tabcolsep}{10pt}

\begin{tabular}{ccccccccccc}
\multicolumn{11}{c}{$m=5$, $k=2$} \\ \hline

$\succ_1$ & $\succ_1^*$ & $\succ_{2,3}$ & $\succ_{4,5}$ & $\succ_{6,7}$
& $\succ_{8,9}$ & $\succ_{10,11}$ & $\succ_{12,13}$ & $\succ_{14,15}$ & $\succ_{16,17}$ & $\succ_{18}$ \\ \hline

$a$&$a$ &$a$ &$a$ &$b$ &$b$ &$d$ &$d$ &$e$ &$e$ &$e$ \\
$b$&$d$ &$d$ &$e$ &$d$ &$e$ &$a$ &$b$ &$a$ &$b$ &$c$ \\
$c$&$c$ &$b$ &$b$ &$a$ &$a$ &$e$ &$e$ &$d$ &$d$ &$b$ \\
$d$&$b$ &$e$ &$d$ &$e$ &$d$ &$b$ &$a$ &$b$ &$a$ &$a$ \\
$e$&$e$ &$c$ &$c$ &$c$ &$c$ &$c$ &$c$ &$c$ &$c$ &$d$ \\ \hline

\end{tabular}

\caption{Profile proving Monroe PM and Monroe OM fail reorder later-no-harm.}
\label{tab:re_harm_2}
\end{table}

In Table \ref{tab:re_harm_1}, in the left profile under both CC PM and CC OM, $\{a,b\}=112$ and $\{e,f\}=111$. Therefore $W=\{a,b\}$. In the election with $\succ_1^*$, $\{a,b\}=112$ and $\{e,f\}=113$. Therefore $W^*=\{e,f\}$. $a$ is harmed when voter 1 reorders candidates below them. 

In the middle profile under $k$PM and $k$OM, $W=\{a,b\}$ as $a=11$, $b= 10$, $c=5$, $d=5$, and $e=9$. $W^*=\{a,e\}$ as $a=11$, $b= 10$, $c=3$, $d=5$, and $e=11$. Thus, $b$ is harmed by the reordering of candidates below it. 

In the right profile under Bloc, $W=\{a,e,f\}$ while there is a tie between $\{a,e,f\}$ and $\{d,e,f\}$ for $W^*$. There is a 50\% chance $a$ is harmed. In Table \ref{tab:re_harm_2}, under both Monroe PM and Monroe OM, $\{a,b\}=62$ and $\{d,e\}=61$. Therefore $W=\{a,b\}$. The representatives are $a(\{a,b\})=\{1,\ldots, 5,10,11,14,15 \}$, $b(\{a,b\})=\{ 6, \ldots , 9,12,13,16,17,18 \}$, $d(\{d,e\})=\{ 1,2,3,6,7,10,\ldots,13 \}$, and $e(\{d,e\})=\{ 4,5,8,9,14,\ldots,18 \}$. In the election with $\succ_1^*$, $\{a,b\}=62$ and $\{d,e\}=63$. Therefore $W^*=\{d,e\}$. The representatives do not change. $a$ and $b$ are harmed by the reordering of candidates below them.
\end{proof}

\subsection{Append later-no-help}

\begin{proposition}
    STV, Meek STV, $k$PM, CC PM, and Bloc satisfy append later-no-help.
\end{proposition}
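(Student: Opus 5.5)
The plan is to treat the rules in three groups, mirroring the proof of append later-no-harm. For STV and Meek STV, I will invoke Proposition \ref{prop:1}. Appending candidates to the end of $\succ_i$ is a manipulation strictly below every candidate $c \in C_i$. So the outcome of each such $c$ is unchanged, and no $c \in C_i$ that lost in $W$ can appear in $W^*$. The one point to check is that under Meek STV the quota $q=\frac{n-excess}{k+1}$ changes when ballots that previously went to excess now reach appended candidates. I will argue this is already covered by Proposition \ref{prop:1}: a fraction of the ballot only reaches candidates below $c$ after $c$'s fate is decided, so $c$ is eliminated before any excess from $\succ_i$ is redirected, and an eliminated candidate never returns.

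For $k$PM and Bloc, I will do a direct score comparison. Under $k$PM, each $c\in C_i$ keeps $S^{PM}_i(c)=m-r_i(c)$, since its rank does not change. Each appended candidate $a$ goes from $0$ to $m-r_i^*(a)\ge 0$, and every other candidate stays at $0$. Thus the scores of ranked candidates are constant and every other score weakly increases, so a losing $c\in C_i$ cannot overtake anyone and cannot enter $W^*$. Under Bloc, the top $k$ of $\succ_i^*$ contains the top $k$ of $\succ_i$, and an appended candidate can only receive a point in a position that previously gave no point to anyone. So again ranked candidates are unchanged and others weakly increase. Ties need a remark. If a loser $c$ was tied and lost only by random tie-breaking, appending can only raise competitors, so $c$'s chance of winning does not increase. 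I will state the axiom in the tie-free or probability-monotone sense the paper uses for Bloc.

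For CC PM, I will follow the committee-based argument from the CC OM part of the previous proof, with the inequalities reversed. Let $X$ be a committee containing a candidate of $C_i$ and $Y$ one containing none. For $X$, voter $i$ is still represented by their highest-ranked member, whose rank is unchanged, so $i$'s contribution to $X$ is constant. For $Y$, $i$ originally contributed $0$. Afterwards they contribute $0$ or $m-r_i^*(a)\ge0$ if an appended $a\in Y$, so $Y$'s score weakly increases. Now suppose some $c\in C_i$ lies in $W^*$ but not in $W$. Then $W^*$ is an $X$-type committee whose score is unchanged, while the previous winner $W$ (of either type) has weakly increased score. So $S(W)\ge S(W^*)$ still holds both before and after the change, and $W^*$ could not newly win strictly.

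The main obstacle I expect is the tie and strictness bookkeeping in the CC PM and Bloc cases, together with making sure the Meek quota dynamics are genuinely covered by Proposition \ref{prop:1}. The score monotonicity itself is routine.
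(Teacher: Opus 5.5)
Your proposal is correct and follows the paper's proof essentially step for step: Proposition~\ref{prop:1} for STV and Meek STV, the observation that ranked candidates' scores stay fixed while all other scores weakly increase for $k$PM and Bloc, and the $X$/$Y$ committee split for CC PM. Your remarks on the Meek quota and on random tie-breaking are refinements the paper leaves implicit, with one small correction: a losing $c$ need not be eliminated, since it may still be hopeful when the count stops, but then nothing from $\succ_i$ ever passes below it and the count is unchanged anyway.
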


\begin{proof}
    By Proposition 1, candidates can never be helped by the appending of candidates below them under STV and Meek STV. Therefore both satisfy append later-no-help.

    Let $P$ be any preference profile, $i$ be a voter in $P$ with the ballot $\succ_i$, and let $W$ be the winning committee of the election with $\succ_i$. Let $\succ_i^*$ be voter $i$'s ballot after appending at least one candidate to the end, and let $a$ be any appended candidate. 
    
    Consider CC PM. Let $X$ be any committee with a candidate in $C_i$, and let $Y$ be any committee that contains no such candidate. For each possible $X$, $S_i(X)=S_i^*(X)$ since $X$'s score is determined by their best representative, which does not change when candidates are appended to $\succ_i$. For each $Y$, $S_i(Y) \leq S_i^*(Y)$ as $S_i(Y)=0$, but $S_i^*(Y)>0$ could be true if there exists a candidate in $Y \cap C_i^* \setminus C_i$. A candidate on $\succ_i$ cannot be helped by appending additional candidates to $\succ_i$ since the scores of the committees containing them, $X$, do not change, while the scores of the other committees, $Y$, increase or stay the same. Therefore CC PM satisfies append later-no-help.

    Now, consider $k$PM and Bloc. Let the notation $S_i(c)$ also be used here for Bloc to represent the number of points voter $i$ gives to candidate $c$. For a candidate $c$ on $\succ_i$, $S_i(c)=S_i^*(c)$, and for each candidate $a$, $S_i(a) \leq S_i^*(a)$. Each candidate $c$ cannot be helped by appending candidates to $\succ_i$ since $c$'s score stays the same, while the other candidates' scores either increase or stay the same. Therefore $k$PM and Bloc satisfy append later-no-help.
    
\end{proof}

\begin{proposition}
    $k$OM, CC OM, Monroe PM, and Monroe OM fail append later-no-help.
\end{proposition}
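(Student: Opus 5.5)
This is a failure result, so the plan is to exhibit explicit counterexample profiles rather than prove anything in general. Each profile has a voter 1 whose honest ballot $\succ_1$ is truncated. The appended ballot $\succ_1^*$ agrees with $\succ_1$ on $C_1$ and adds candidates at the end. Some candidate $c\in C_1$ must be in $W^*\setminus W$.

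For $k$OM the work is already done. Table \ref{tab:kOM_ap_help} shows it: $\succ_1^H=a\succ b$, and $\succ_1^*$ appends $c,d,e$. Then $a\in W^*\setminus W$, since $W=\{b,e\}$ and $W^*=\{a,b\}$. I would simply cite it.

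For CC OM the mechanism should be the same burying effect. Under the optimistic model, an unranked rival such as $e$ receives $m-l_1-1$ points from voter 1. Appending it low lowers that score, while committees containing a ranked candidate keep their score by the argument used for CC OM in the append later-no-harm proof.

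\begin{itemize}
\item I would start from the $k$OM profile, or a small variant with $k=2$, and compute all committee scores.
\item Two committees should be nearly tied: $W\ni e$ with $a\notin W$, and a committee $\{a,x\}$.
\item Voter 1 has to be represented by $e$ in the competing committee $W$. This happens automatically when $W$ contains none of voter 1's ranked candidates.
\item Appending $e$ at the bottom drops $W$'s score by at least one point, so $\{a,x\}$ overtakes it.
\end{itemize}

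The easiest design takes $\succ_1=a$ alone and $W=\{d,e\}$, so that voter 1 ranks nothing in $W$. Appending $b,c,d,e$ lowers $W$'s contribution from $m-2$ to $0$ or $1$, while $\{a,\cdot\}$ stays at $m-1$.

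For Monroe OM I would reuse this idea, with all other ballots complete so that OM and PM agree on them. I would then check that the optimal assignments keep voter 1 in the affected group.

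Monroe PM is the main obstacle. Appending never lowers any candidate's PM score from voter 1, so burying cannot work. Instead I would exploit the assignment constraint. Appending raises voter 1's utility for some candidate $b$ in a committee $\{a,b\}$, where voter 1 is forced to be represented by $b$ because $a$ is oversubscribed. This is like the skipped-payment example in Table \ref{tab:MPM_MOM_skipped}.

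\begin{itemize}
\item Take $\succ_1=a$ and $\succ_1^*=a\succ b$, with $\{a,b\}$ trailing some committee $W$ by one point.
\item Voter 1 must be assigned to $b$ in $\{a,b\}$'s optimal assignment, for example because several voters rank $a$ first and rank $b$ last or not at all.
\item Appending $b$ adds $m-2$ points to $\{a,b\}$, which flips the outcome, so $a\in W^*\setminus W$.
\end{itemize}

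The left profile of Table \ref{tab:ap_harm} already behaves this way for Monroe PM, but in the harm direction: $\{b,c\}$ gains when $b$ is appended. I would adapt it by making voter 1's honest ballot contain $b$ rather than $a$. For example, take $\succ_1=b$ with $W=\{a,c\}$ originally, then append so that $b$ wins, checking $b\in C_1$.

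The delicate step is verifying by hand that no third committee overtakes and that the optimal Monroe assignments are as claimed. I would keep $m$ small ($m=3$ or $4$) so that every committee can be enumerated.
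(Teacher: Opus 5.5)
Your strategy is the paper's. $k$OM is handled by Table~\ref{tab:kOM_ap_help}. For CC OM and Monroe OM, the left profile of Table~\ref{tab:ap_help} is exactly your \emph{easiest design}: voter~1 bullet-votes $a$ and then appends $b\succ c\succ d\succ e$, so $\{d,e\}$ falls from $63$ to $61$ while $\{a,b\}$ stays at $62$. For Monroe PM the paper uses the quota mechanism you describe.

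Under Monroe OM no assignment bookkeeping is needed for those two committees. Every assignment for $\{d,e\}$ loses at least $2$ points from voter~1, and every assignment for $\{a,b\}$ is unchanged, which is why one profile serves both rules. Note also that under OM a candidate appended at position $l_1+1$ keeps its score $m-l_1-1$. So burying lowers $W$'s contribution only if that position goes to a candidate outside $W$, as in your concrete choice.

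The gap is that the proposal omits what constitutes the proof. For a failure statement the counterexample is the argument, and you exhibit one only for $k$OM.

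For Monroe PM, the missing construction is not routine under your particular design (with $k=2$). Appending under PM never lowers any committee's score, and it raises $\{a,b\}$ by at most $m-2$. So with $m=3$ you get at best a tie. With $m=4$ you need a deficit of exactly $1$ and a gain of exactly $2$. That forces at least $n/2$ voters besides voter~1 to rank $a$ first with $b$ unranked or last. Those voters also inflate $\{a,c\}$ and $\{a,d\}$, which must still lose to $W$.

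It is achievable, for example with $m=4$, $n=18$:
voter~1 $=a$, four ballots $a\succ c$, five $a\succ d$, three $c\succ b$, one $c$, and four $d\succ b$.
Originally $\{c,d\}=42$ beats $\{a,b\}=\{a,c\}=\{a,d\}=41$ and $\{b,c\}=\{b,d\}=28$. After voter~1 appends $b$, $\{a,b\}=43$, $\{c,d\}=42$, and $\{b,c\}=\{b,d\}=30$.

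The paper's profile is much smaller ($n=6$, $m=5$) because the helped candidate is voter~1's second choice. With $\succ_1=b\succ e$, $e$ is oversubscribed in $\{d,e\}$, so voter~1 can be represented by $d$ in an optimal assignment, and appending $d$ adds $2$ points. Meanwhile the original winner $\{a,b\}$ is carried by voter~1's favorite $b$.

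Finally, your suggested adaptation of Table~\ref{tab:ap_harm} with $\succ_1=b$ fails as stated. That change alone makes $\{b,c\}$ the winner ($35$ against $30$ for $\{a,c\}$), so $b$ is already in $W$.
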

\begin{proof}
\begin{table}[h]
\centering
\setlength{\tabcolsep}{3pt}

\begin{tabular}{ccccccccccc||ccccccc}

\multicolumn{11}{c||}{$m=5$, $k=2$}
&
\multicolumn{7}{c}{$m=5$, $k=2$}
\\ \hline

$\succ_1$ & $\succ_1^*$ & $\succ_{2,3}$ & $\succ_{4,5}$ & $\succ_{6,7}$
& $\succ_{8,9}$ & $\succ_{10,11}$ & $\succ_{12,13}$ & $\succ_{14,15}$ & $\succ_{16,17}$ & $\succ_{18}$
&
$\succ_1$ & $\succ_1^*$ & $\succ_2$ & $\succ_3$ & $\succ_4$ & $\succ_5$ & $\succ_6$
\\ \hline

$a$&$a$ &$a$ &$a$ &$b$ &$b$ &$d$ &$d$ &$e$ &$e$ &$e$
&
$b$& $b$ & $c$ &$d$ &$a$ &$e$ &$d$
\\

&$b$ &$d$ &$e$ &$d$ &$e$ &$a$ &$b$ &$a$ &$b$ &$c$
&
$e$&$e$ &$e$ &$b$ &$e$ &$a$ &$a$
\\

&$c$ & & & & & & & & &
&
&$d$ &$b$ & &$b$ &$c$ &$e$
\\

&$d$ & & & & & & & & &
&
& & & & &$d$ &$b$
\\

&$e$ & & & & & & & & &
&
& & & & & &
\\ \hline

\end{tabular}

\caption{Profiles proving failure of append later-no-help. Left is CC OM and Monroe OM, and right is Monroe PM.}
\label{tab:ap_help}
\end{table}

A profile proving $k$OM fails append later-no-help is provided in Table \ref{tab:kOM_ap_help}.

In Table \ref{tab:ap_help}, in the left profile under CC OM and Monroe OM, $\{d,e\}=63$ and $\{a,b\}=62$. Therefore $W=\{d,e\}$. Under Monroe OM the representatives are $a(\{a,b\})= \{ 1,\ldots,5,10,11,14,15  \}$, $b(\{a,b\})= \{ 6,\ldots,9,12,13,16,17,18 \}$, $d(\{d,e\})= \{ 1,2,3,6,7,10,\ldots,13 \}$, and $e(\{d,e\})= \{ 4,5,8, $ $9,14,\ldots,18\}$. In the election with $\succ_1^*$, $\{d,e\}=61$ and $\{a,b\}=62$. Therefore $W^*=\{a,b\}$. The representatives do not change under Monroe OM. $a$ is helped by voter 1 appending candidates below it. 

In the right profile under Monroe PM, $\{a,b\}=19$ and $\{d,e\}=18$. Therefore $W=\{a,b\}$. The representatives are $a(\{a,b\}) = \{ 4,5,6 \}$, $b(\{a,b\}) = \{ 1,2,3 \}$, $d(\{d,e\}) = \{ 1,3,6 \}$, and $e(\{d,e\}) = \{ 2,4,5 \}$. In the election with $\succ_1^*$, $\{a,b\}=19$ and $\{d,e\}=20$. Therefore $W^*=\{d,e\}$. The representatives do not change. $e$ is helped by voter 1 appending $d$ below it.
\end{proof}

\subsection{Reorder later-no-help}

\begin{proposition}
    STV and Meek STV satisfy reorder later-no-help.
\end{proposition}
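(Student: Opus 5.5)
The plan is to reduce the statement directly to Proposition \ref{prop:1}, in the same way the reorder later-no-harm result for STV and Meek STV was obtained. Fix a profile $P$, a voter $i$, and a manipulated ballot $\succ_i^*$ satisfying the hypotheses of reorder later-no-help. That is, $\succ_i$ and $\succ_i^*$ agree on every position up to and including some candidate $c$, $C_i = C_i^*$, and the two ballots differ only in the order of candidates ranked strictly after $c$. We must show $c \notin W^* \setminus W$, so that $c$ cannot be turned from a loser into a winner.

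The key step is to observe that the manipulation occurs strictly below $c$ on $\succ_i$. We also need to check that it occurs strictly below every candidate ranked above $c$, since those candidates' fates could in principle influence $c$ through transfers. Since the prefix of the ballot through $c$ is unchanged, Proposition \ref{prop:1} applies to $c$ and to each candidate ranked above $c$ on $\succ_i$.

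To make this airtight, I would argue by following the counting process round by round. As long as neither $c$ nor any candidate above it on $\succ_i$ has been elected or eliminated, voter $i$'s ballot (or any fraction of it) sits with a candidate in the common prefix. The two elections therefore evolve identically up to the moment $c$'s fate is decided. Under Meek STV, the keep weights of prefix candidates and the quota could seem to depend on downstream values. However, before $c$ is decided, no value from $\succ_i$ passes beyond $c$, so the excess and all downstream transfers from $\succ_i$ coincide in the two elections. Once $c$ is elected it stays elected under both rules, which is the remark at the end of Proposition \ref{prop:1}'s proof. Hence $c \in W$ if and only if $c \in W^*$, and no candidate ranked at or above $c$ can be helped.

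The main obstacle is the Meek STV iteration: its quota and weights are computed globally at equilibrium, so ``identical until $c$ is decided'' needs the observation that no portion of ballot $i$ reaches below $c$ while $c$ is hopeful, because a hopeful $c$ has keep weight 1. I would simply cite Proposition \ref{prop:1} for this, as the paper does in the previous results, and conclude that both rules satisfy reorder later-no-help.
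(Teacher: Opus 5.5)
Your proposal is correct and follows the paper's proof, which is a direct appeal to Proposition~\ref{prop:1}. Your round-by-round elaboration just unpacks that proposition: while $c$ is hopeful no value of ballot $i$ passes below $c$, so vote totals, excess, quota and keep weights coincide in the two elections until $c$ is elected or eliminated, and that status is permanent. The only thing to tidy is the invariant, which should read ``as long as $c$ is hopeful'' rather than ``as long as neither $c$ nor any candidate above it has been elected or eliminated,'' since candidates above $c$ may be decided earlier without affecting the argument; your Meek paragraph already uses the correct form.
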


\begin{proof}
    By Proposition 1, candidates are unaffected and therefore not helped by the reordering of candidates below them under STV and Meek STV. Therefore both of these rules satisfy reorder later-no-help.
\end{proof}

\begin{proposition}
    $k$PM, $k$OM, CC PM, CC OM, Monroe PM, Monroe OM, and Bloc fail reorder later-no-help.
\end{proposition}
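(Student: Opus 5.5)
The plan is to get every failure from the counterexamples already built for reorder later-no-harm, using a symmetry between the two reorder axioms. Take a profile in which the harm axiom fails: voter $i$ moves from $\succ_i$ to $\succ_i^*$, there is a candidate $c \in (W \setminus W^*) \cap C_i$, the two ballots agree up to and including $c$, $C_i = C_i^*$, and some candidate below $c$ changes position. Now swap the roles of the two ballots, so that $\succ_i^*$ is the original ballot and $\succ_i$ is the manipulated one. Every hypothesis of reorder later-no-help is symmetric in the two ballots: they still agree up to and including $c$, they rank the same set of candidates, and some candidate below $c$ still changes position. The old $W^*$ becomes the new original committee and the old $W$ becomes the new manipulated committee, so $c$ now lies in (new manipulated committee) $\setminus$ (new original committee), intersected with $C_i$. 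That is exactly a failure of reorder later-no-help. I would state this observation first and then apply it to each rule.

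First, for CC PM and CC OM I would reuse the left profile of Table \ref{tab:re_harm_1}. Take voter 1's original ballot to be $a \succ b \succ f \succ d \succ e \succ c$ and the manipulated ballot to be $a \succ b \succ c \succ d \succ e \succ f$. The scores already computed give $W = \{e,f\}$ originally and $W^* = \{a,b\}$ after the manipulation. So $a$ and $b$, which sit above the reordered part, are helped.

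Second, for $k$PM and $k$OM I would swap the ballots in the middle profile of Table \ref{tab:re_harm_1}. This takes $W=\{a,e\}$ to $W^*=\{a,b\}$, so $b$ is helped. Alternatively, I could cite the burying-by-reordering example after Table \ref{tab:kOM_ap_help}. All ballots there are complete once voter 1 ranks $a \succ b \succ e \succ c \succ d$, so $k$PM and $k$OM coincide and $a$ is helped.

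Third, for Monroe PM and Monroe OM I would swap the ballots in Table \ref{tab:re_harm_2}. Voter 1's original ballot becomes $a \succ d \succ c \succ b \succ e$ and the manipulated ballot becomes $a \succ b \succ c \succ d \succ e$. The scores move from $\{d,e\}=63$, $\{a,b\}=62$ to $\{a,b\}=62$, $\{d,e\}=61$, with the representatives as stated there. This is the same computation as the left profile of Table \ref{tab:ap_help}, so $a$ is helped.

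The step needing most care is Bloc, where the harm example relies on a tie. After the swap, the original election is the tied one: $\{a,e,f\}$ and $\{d,e,f\}$ each win with probability $1/2$. The manipulated ballot $a \succ b \succ c \succ d$ yields $\{a,e,f\}$ deterministically. So with probability $1/2$ the original committee excludes $a$, and then the reordering below $a$ makes $a$ a winner. This is a failure that occurs only through a tie, consistent with the ``$\times$!'' entry for Bloc in Table \ref{tab:results}. I would check that the random-tie convention of Section \ref{sec:preliminaries} is applied the same way as in the earlier Bloc proofs. If a tie-free example were wanted instead, I expect none exists, since the entry is marked ``!''.
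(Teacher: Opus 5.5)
Your proof is correct, but it takes a different route from the paper's. The paper builds fresh counterexamples (Tables \ref{tab:re_help_1} and \ref{tab:re_help_2}). You instead observe that every hypothesis of the two reorder axioms is invariant under exchanging $\succ_i$ and $\succ_i^*$, while the exchange turns $W\setminus W^*$ into $W^*\setminus W$. So a rule fails reorder later-no-harm exactly when it fails reorder later-no-help, and the proposition becomes a corollary of the preceding one. The reversed profiles do what you claim. For CC, $\{e,f\}=113$ beats $\{a,b\}=112$, and after the reorder $\{a,b\}=112$ beats $\{e,f\}=111$; every other committee scores at most $110$ in both elections. For $k$-Borda the winner moves from $\{a,e\}$ to $\{a,b\}$. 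In Table \ref{tab:re_harm_2}, every other Monroe committee scores at most $60$. Your route is more economical, and it explains why the two reorder columns of Table \ref{tab:results} agree for every rule, including the STV rules and the tie-only Bloc failure. The paper's route gives self-contained examples in the natural strategic direction. In your Bloc example the tie sits in the original election rather than the manipulated one. That still counts as a failure under the random tie-breaking convention, but it makes the voter's starting outcome a lottery. One side remark is inaccurate: the reversed Monroe profile is not the left profile of Table \ref{tab:ap_help}, where the other voters truncate and voter 1 ranks only $a$. Only the scores of $\{a,b\}$ and $\{d,e\}$ coincide, and your proof does not need that comparison.
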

\begin{proof}
\begin{table}[h]
\centering
\setlength{\tabcolsep}{4pt}

\begin{tabular}{ccccc||cccccc||cccccc}

\multicolumn{5}{c||}{$m=5$, $k=2$}
&
\multicolumn{6}{c||}{$m=6$, $k=2$}
&
\multicolumn{6}{c}{$m=6$, $k=3$}
\\ \hline

$\succ_1$ & $\succ_1^*$ & $\succ_2$ & $\succ_3$ & $\succ_4$
&
$\succ_1$ & $\succ_1^*$ & $\succ_2$ & $\succ_{3,12}$ & $\succ_{13,22}$ & $\succ_{23,32}$
&
$\succ_1$ & $\succ_1^*$ & $\succ_{2,3}$ & $\succ_{4,6}$ & $\succ_{7,12}$ & $\succ_{13,18}$
\\ \hline

$a$&$a$ &$a$ &$b$ &$c$
&
$a$&$a$ &$f$ &$c$ &$f$ &$a$
&
$a$&$a$ &$a$ &$c$ &$e$ &$f$
\\

$b$&$b$ &$c$ &$a$ &$b$
&
$b$&$b$ &$d$ &$b$ &$b$ &$d$
&
$b$&$b$ & & & &
\\

$c$&$e$ &$e$ &$c$ &$e$
&
$c$&$e$ &$c$ &$e$ &$e$ &$f$
&
$c$&$d$ & & & &
\\

$d$&$d$ &$d$ &$d$ &$a$
&
$d$&$d$ &$b$ &$d$ &$c$ &$e$
&
$d$&$c$ & & & &
\\

$e$&$c$ &$b$ &$e$ &$d$
&
$e$&$c$ &$a$ &$a$ &$d$ &$c$
&
& & & & &
\\ \hline

\end{tabular}

\caption{Profile proving failure of reorder later-no-help. Left is $k$OM and $k$PM, middle is CC PM and CC OM, and right is Bloc.}
\label{tab:re_help_1}
\end{table}
\begin{table}[h]
\centering
\setlength{\tabcolsep}{10pt}

\begin{tabular}{ccccccccccc}
\multicolumn{11}{c}{$m=5$, $k=2$} \\ \hline

$\succ_1$ & $\succ_1^*$ & $\succ_{2,3}$ & $\succ_{4,5}$ & $\succ_{6,7}$
& $\succ_{8,9}$ & $\succ_{10,11}$ & $\succ_{12,13}$ & $\succ_{14,15}$ & $\succ_{16,17}$ & $\succ_{18}$ \\ \hline

$a$&$a$ &$a$ &$a$ &$b$ &$b$ &$d$ &$d$ &$e$ &$e$ &$e$ \\
$d$&$b$ &$d$ &$e$ &$d$ &$e$ &$a$ &$b$ &$a$ &$b$ &$c$ \\
$e$&$c$ &$b$ &$b$ &$a$ &$a$ &$e$ &$e$ &$d$ &$d$ &$b$ \\
$b$&$d$ &$e$ &$d$ &$e$ &$d$ &$b$ &$a$ &$b$ &$a$ &$d$ \\
$c$&$e$ &$c$ &$c$ &$c$ &$c$ &$c$ &$c$ &$c$ &$c$ &$a$ \\ \hline

\end{tabular}

\caption{Profile proving Monroe PM and Monroe OM fail reorder later-no-help.}
\label{tab:re_help_2}
\end{table}

In Table \ref{tab:re_help_1}, in the left profile under $k$OM and $k$PM, $W=\{a,c\}$ as $a=12$, $b= 10$, $c=11$, $d=3$, and $e=4$. $W^*=\{a,b\}$ as $a=12$, $b= 10$, $c=9$, $d=3$, and $e=6$. Thus, $b$ is helped by the reordering of candidates below it. 

In the middle profile under CC PM and CC OM, $\{c,f\}=138$ and $\{a,b\}=137$. Therefore $W=\{c,f\}$. In the election with $\succ_1^*$, $\{c,f\}=136$ and $\{a,b\}=137$. Therefore $W^*=\{a,b\}$. $a$ and $b$ are helped by the reordering of candidates below them. 

In the right profile under Bloc, $W=\{c,e,f\}$ while there is a tie between $\{a,e,f\}$ and $\{c,e,f\}$ for $W^*$. There is a 50\% chance $a$ is helped by the reordering of candidates below it. 

In Table \ref{tab:re_help_2}, under Monroe PM and Monroe OM, $\{d,e\}=63$ and $\{a,b\}=62$. Therefore $W=\{d,e\}$. The representatives are $a(\{a,b\})=\{1,\ldots, 5,10,11,14,15 \}$, $b(\{a,b\})=\{ 6, \ldots , 9,12,13,16,$ $17,18 \}$, $d(\{d,e\})=\{ 1,2,3,6,7,10,\ldots,13 \}$, and $e(\{d,e\})=\{ 4,5,8,9,14,\ldots,18 \}$. In the election with $\succ_1^*$, $\{a,b\}=62$ and $\{d,e\}=61$. Therefore $W^*=\{a,b\}$. The representatives do not change. $a$ is helped by the reordering of candidates below it. 
\end{proof}

\subsection{No skipped payment}

\begin{proposition}
    Meek STV, CC PM, CC OM, $k$OM, $k$PM, and Bloc satisfy no skipped payment.
\end{proposition}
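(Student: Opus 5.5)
I would prove the statement rule by rule, since each rule has its own notion of payment. For $k$OM, $k$PM and Bloc the argument is essentially definitional. For Meek STV and the two CC variants it rests on a structural fact: a voter's ballot, or their representation, always flows top-down through the winners.

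\emph{Bloc and $k$-Borda.} Fix a voter $i$ and candidates $a,b$ with $a\in W$, $a\succ_i b$, and suppose $i$ paid for $b$.
\begin{itemize}
\item Under $k$-Borda (either model), $b$ is ranked and is not in position $m$. Since $r_i(a)<r_i(b)\le m$, $a$ is ranked and is not in position $m$ either, so $i$ paid for $a$.
\item Under Bloc, $b$ lies in the top $k$ of $\succ_i$. Hence $r_i(a)<r_i(b)\le k$, so $a$ is also in the top $k$ and $i$ paid for $a$.
\end{itemize}
In neither case is membership of $a$ in $W$ needed; only the ranking order matters.

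\emph{Chamberlin--Courant.} Both CC PM and CC OM assign each voter, for any committee $G$, to their highest-ranked member of $G$. The only exception is when they rank no member of $G$, but then $b\notin C_i$ is impossible if $b\in G$ is ranked. Now take any committee $G$ containing both $a$ and $b$ with $a\succ_i b$. Voter $i$'s highest-ranked member of $G$ is ranked at least as high as $a$, so it cannot be $b$. Therefore no committee containing both $a$ and $b$ represents $i$ by $b$. One wrinkle is ties in which member is "highest", but ballots are strict, so there are none.

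\emph{Meek STV.} This is the step I expect to be the real obstacle, because payment here is defined through the final keep-weight equilibrium. I would argue from the transfer rule at termination. The portion of $\succ_i$ reaching a candidate $c$ equals
\[
\prod_{d\succ_i c}(1-w_d),
\]
and $c$ keeps $w_c$ times this amount. Suppose $i$ paid for $b$, that is, $b$ keeps a positive fraction of $\succ_i$. Then every candidate above $b$ on $\succ_i$ has $w_d<1$, or more precisely the product above is positive. This holds in particular for $a$, so a positive amount of the ballot reaches $a$. Since $a\in W$ is elected, $w_a>0$: an elected candidate's weight is chosen so that their total equals the positive quota, and it is never $0$. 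Hence $a$ keeps a positive part of $\succ_i$, and $i$ paid for $a$. The delicate points are these:
\begin{itemize}
\item justifying that elected candidates have $w_a>0$ at the final equilibrium, which I would handle by noting that an elected candidate with $w_a=0$ would have total $0<q$;
\item confirming that a winner in the final committee is "elected" in Meek's sense, not merely hopeful when seats run out. Even in that case the winner is hopeful with $w_a=1>0$, so the argument still goes through.
\end{itemize}
This contrasts with ordinary STV, where a ballot can skip an already-elected $a$ entirely. That is exactly the failure in Table~\ref{tab:scotSTV_skipped}.
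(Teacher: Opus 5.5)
Your proposal is correct and follows the paper's proof rule by rule. Bloc and $k$-Borda are handled by the definitional rank comparison, and CC by the fact that a voter is always represented by their highest-ranked committee member. For Meek STV you argue, as the paper does, that any value reaching $b$ must pass through $a$, and $a$ keeps part of it because $w_a>0$; your product formula $\prod_{d\succ_i c}(1-w_d)$ is a more precise version of the paper's notion of $a$ being ``revealed'' on the ballot.
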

\begin{proof}
    By the definition of Meek STV, every candidate $c \in W$ has a keep weight $w_c>0$. Let us say a candidate has been \textit{revealed} on a ballot if all candidates above them have been elected or eliminated. For each pair of candidates $a,b \in C_i$, such that $a\in W$ and $a\succ_ib$, in order for $i$ to have paid for $b$, $a$ must also have been revealed as $a\succ_ib$. Since $a$ is revealed on $\succ_i$ and $w_a>0$, $i$ paid for $a$. Thus, if $i$ paid for $b$, then $i$ must have also paid for $a$. Therefore Meek STV satisfies no skipped payment.

    Under both CC variants, a voter is represented by their highest-ranked member of each committee, assuming they rank at least one. Therefore, for each pair of candidates $a,b \in C_i$ that are in a given committee $G$, such that $a\in W$ and $a\succ_ib$, it is impossible for $i$'s representative from $G$ to be $b$ and therefore both CC PM and CC OM satisfy no skipped payment. 

    Under Bloc, a voter pays for all candidates $c$ such that $r_i(c) \leq k$, and pays only for these candidates. It is impossible for a voter to have paid for $b$ and not $a$ where $a \succ_i b$, and therefore Bloc satisfies no skipped payment.

    Under $k$-Borda, a voter pays for every candidate in their top $m-1$ that is elected and therefore cannot skip paying for one. Thus, it satisfies no skipped payment.
\end{proof}

\begin{proposition}
    STV, Monroe PM, and Monroe OM fail no skipped payment.
\end{proposition}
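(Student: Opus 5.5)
Since this is a failure result, the plan is to exhibit counterexamples, and the paper has already supplied both of them in Section~\ref{sec:properties}. For STV, I would reuse the profile of Table~\ref{tab:scotSTV_skipped} and analyse the election in which voter~1 submits $\succ_1^*=d\succ a\succ b$. For Monroe PM and Monroe OM, I would reuse the profile of Table~\ref{tab:MPM_MOM_skipped} with the honest ballot $\succ_1^H$. In each case I need a voter $i$ and candidates $a\succ_i b$ with $a\in W$, such that $i$ paid for $b$ but not for $a$, in the rule-specific sense of the definitions.

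For STV the quota is $3$ and the relevant election is the one with $\succ_1^*$. First, $a$ is elected with $4$ votes, but none of voter~1's ballot reaches $a$ at that stage, since the ballot still sits with $d$. Then $d$ is eliminated, and voter~1's whole ballot transfers past the already elected $a$ directly to $b$. Since $b$ is hopeful, it keeps a positive fraction, so voter~1 paid for $b$. Finally $b$ is elected with $2.25$ votes, giving $W^*=\{a,b\}$. The definition is applied to this election's own winning committee, so $a\in W^*$ and $a\succ_1^* b$ hold. The key point is that $a$ never keeps any fraction of voter~1's ballot, because elected candidates in STV do not receive transfers. Hence voter~1 paid for $b$ but not for $a$, and the axiom fails.

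For Monroe, the committee-based definition only needs some committee $G\ni a,b$ in which $i$ is represented by $b$, together with $a\in W$ and $a\succ_i b$. In the honest election, $W=\{a,c\}$, so $a\in W$, and $a\succ_1^H b$. I would then verify that the optimal Monroe assignment for $G=\{a,b\}$ puts voter~1 with $b$. Voters $1,2,3$ all rank $a$ first, but $a$ can represent only two voters. Among these three, voter~1 ranks $b$ highest (second, while voters 2 and 3 rank $b$ last), so any score-maximizing assignment sends voter~1 to $b$. This can be checked by comparing the few possible assignments directly.

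All ballots in this profile are complete, so the PM and OM scores coincide and one computation covers both rules. The only delicate step is showing that this assignment is the unique maximizer, or at least that every maximizer represents voter~1 by $b$, so that the claim about the representative is well defined. I would settle this by explicitly listing the $\binom{4}{2}=6$ splits of the four voters into two pairs and computing each split's score.
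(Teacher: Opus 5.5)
Your proposal is correct and matches the paper's proof, which simply cites Table~\ref{tab:scotSTV_skipped} (the election with $\succ_1^*$, where voter~1's whole ballot skips the already-elected $a$ and is kept by $b$) and Table~\ref{tab:MPM_MOM_skipped} (the honest election with $W=\{a,c\}$, where voter~1's representative in $\{a,b\}$ is $b$). Your planned enumeration of the six splits for $\{a,b\}$ is a worthwhile check that the paper leaves implicit: the split $a\leftarrow\{2,3\}$, $b\leftarrow\{1,4\}$ with score $10$ is the unique maximizer, so voter~1's representative is well defined.
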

\begin{proof}
    A profile proving STV fails no skipped payment is provided in Table \ref{tab:scotSTV_skipped}, and Monroe PM and Monroe OM in Table \ref{tab:MPM_MOM_skipped}.
\end{proof}

\subsection{Favorite betrayal}
\begin{proposition}
    Every voting rule we consider fails favorite betrayal.
\end{proposition}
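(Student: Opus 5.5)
Since favorite betrayal is a negative property, the proposition asks for a counterexample for each of the nine rules: a profile and a voter $1$ whose honest ballot $\succ_1^H$ has favorite $a$, and a manipulated ballot $\succ_1^*$ that swaps $a$ with some $b$ and nothing else, such that voter $1$ strictly prefers $W^*$ to $W$ under both utility models. To keep the number of profiles small, I will group the rules and use only complete ballots wherever possible. On complete ballots the pessimistic and optimistic variants coincide, as noted in Section~\ref{subsec:votingrules}, so each profile handles both variants at once.

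For Monroe PM and Monroe OM nothing new is needed. Table~\ref{tab:MPM_MOM_skipped} already has $\succ_1^*$ obtained from $\succ_1^H$ by swapping $a$ and $b$, all ballots complete, and $W^*=\{a,b\}$ preferred to $W=\{a,c\}$.

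For $k$PM, $k$OM and Bloc I will use the same mechanism. The voter's favorite $a$ is a safe winner, and $b$ is their second choice, narrowly losing to some $c$ they rank lower. Swapping $a$ and $b$ raises $b$ by one point under Borda, or moves $b$ into or keeps it in the top $k$ under Bloc. It also lowers $a$ by one point, which is harmless if $a$ has a large lead. The committee then changes from $\{a,c\}$ to $\{a,b\}$. For Bloc, swapping within the top $k$ does not change any score, so the swap has to move $b$ from position $k+1$ to position $1$. This displaces $a$ from the top $k$, which lowers $a$ by one point and raises $b$ by one point. Given the table's "!" entry, I expect Bloc to produce only a tie, so the profile will show a 50\% chance of improvement, as in the earlier Bloc examples.

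For the STV rules, the mechanism is that $b$ is eliminated before $a$'s surplus can rescue it. Voter $1$ has ballot $a\succ b\succ\cdots$. $a$ is elected with a surplus so large that $1$'s transfer to $b$ is only fractional, and $b$ is then eliminated in favor of some $c$ that voter $1$ ranks lower or not at all. If $1$ instead puts $b$ first, $b$ gets a full vote and survives, while $a$ still reaches quota. I will build one small profile with $k=2$ and check it under both the Scottish rules and Meek, adjusting vote counts so that the Meek keep-weight iteration does not reverse the outcome. For CC, I would use a complete-ballot profile with $m=4$ and $k=2$ in which $\{a,c\}$ beats $\{b,c\}$ or $\{a,b\}$ by one point. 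Because voter $1$ is represented by their best member of each committee, the swap changes the scores of committees containing $b$ but not $a$.

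The main obstacle is the CC case, and to a lesser extent STV. Under CC, voter $1$'s swap lowers every committee containing $a$ and raises every committee containing $b$ but not $a$. The improving committee must therefore contain $b$ but not $a$, say $\{b,d\}$, and voter $1$ must prefer it to the original winner. So the original winner cannot contain $a$ either, for example $W=\{c,e\}$ with $c$ and $e$ ranked below $b$. I would search small profiles, possibly with a computer, to find one where $\{b,d\}$ overtakes $W$ by exactly the swap's margin. After that, I would verify that both utility models agree.
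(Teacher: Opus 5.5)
Your Monroe case is fine. Table~\ref{tab:MPM_MOM_skipped} is a pure swap of $a$ and $b$ on complete ballots, and with $\succ_1^*$ the committee $\{a,b\}$ scores 11 against 10 for $\{a,c\}$. The paper itself notes this when defining the axiom, though its proof uses a different profile. For the other seven rules, however, you only describe mechanisms. The proposition is a conjunction of existence claims, so the explicit, verified profiles \emph{are} the proof. Nothing is yet established for STV, Meek STV, $k$PM, $k$OM, CC PM, CC OM, or Bloc.

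Worse, the mechanisms you chose for $k$-Borda and CC cannot succeed as stated. If voter 1 swaps $a$ with the candidate in position $j$, no candidate's $k$-Borda score and no committee's CC score changes by more than $j-1$, and all scores are integers. With $b$ the second choice ($j=2$), $a$ loses one point, $b$ gains one, and $c$ is unchanged. So $S(c)>S(b)$ implies $S(b)+1\le S(c)$, and the winner can never change strictly from $\{a,c\}$ to $\{a,b\}$, only to a tie. Your CC target ($W=\{c,e\}$ unchanged, $\{b,d\}$ gaining $r_1(b)-1$, which is 1 when $b$ is second) likewise can only tie. That leaves only ``!''-type failures for rules the table marks as failing strictly. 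The missing idea is to swap the favorite with a candidate ranked third or lower, so the promoted candidate gains at least two points. The paper uses $a\succ b\succ c\succ d\succ e \to c\succ b\succ a\succ d\succ e$. Under $k$-Borda, $c$ goes from 38 to 40, passing $e$ (39) while $d$ stays at 41. Under CC, $\{c,d\}$ goes from 67 to 69, passing the unchanged $\{d,e\}$ at 68.

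Your CC bookkeeping also needs correcting. Committees containing both $a$ and $b$ do not drop, since voter 1's representative just switches from $a$ to $b$ at the top. Committees containing $a$ but not $b$ drop by up to $j-1$. So for $j\ge 3$, $\{a,b\}$ can overtake $\{a,c\}$, and your inference that $W$ cannot contain $a$ does not hold.

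For STV your mechanism is exactly the paper's: voter 1's full vote reaches $b$ instead of a fraction of $a$'s surplus. But a single profile serving both STV and Meek is not automatic, and the paper uses two. For example, run the paper's STV profile under Meek. Then $a$ gets keep weight $0.8$ and the quota drops to $48$, which $b$ already reaches exactly ($46+10\cdot 0.2$) with the honest ballot.

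Your Bloc plan matches the paper's: move the $(k+1)$st candidate to the top. The same integrality argument explains why it can only produce a tie. Note that this candidate is not the ``second choice'' of your shared framing.
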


\begin{proof}
\begin{table}[h]
\centering
\setlength{\tabcolsep}{8pt}

\begin{tabular}{cccccccc||ccccc}

\multicolumn{8}{c||}{$m=5$, $k=2$}
&
\multicolumn{5}{c}{$m=4$, $k=2$}
\\ \hline

$\succ_1$ & $\succ_1^*$ & $\succ_2$ & $\succ_3$ & $\succ_{4,7}$ & $\succ_{8,11}$ & $\succ_{12,15}$ & $\succ_{16,19}$ 
&
$\succ_1$ & $\succ_1^*$ & $\succ_{2,5}$ & $\succ_{6,7}$ & $\succ_{8,10}$ 
\\ \hline

$a$&$c$ &$c$ &$e$ &$d$ &$d$ &$e$ &$c$
&
$a$&$c$ &$b$ &$c$ &$d$
\\

$b$&$b$ &$e$ &$a$ &$e$ &$c$ &$d$ &$d$
&
$b$&$b$ & & &
\\

$c$&$a$ &$b$ &$b$ &$c$ &$e$ &$c$ &$e$
&
$c$&$a$ & & &
\\

$d$&$d$ &$a$ &$d$ &$a$ &$b$ &$a$ &$b$
&
$d$&$d$ & & &
\\

$e$&$e$ &$d$ &$c$ &$b$ &$a$ &$b$ &$a$
&
& & & &
\\ \hline

\end{tabular}

\caption{Profiles proving failure of favorite betrayal. Left is CC PM and CC OM and right is Bloc.}
\label{tab:fb_1}
\end{table}

\begin{table}[h]
\centering
\setlength{\tabcolsep}{4pt}
\begin{tabular}{cccccc||cccccc}
\multicolumn{6}{c||}{$m=4$, $k=2$} 
&
\multicolumn{6}{c}{$m=4$, $k=2$} 
\\ \hline
$\succ_{1,10}$ & $\succ_1^*$ & $\succ_{11,60}$ & $\succ_{61,106}$ & $\succ_{107,134}$ & $\succ_{135,154}$ 
&
$\succ_{1,2}$ & $\succ_1^*$ & $\succ_{3,12}$ & $\succ_{13,21}$ & $\succ_{22,27}$ & $\succ_{28,31}$
\\ \hline
$a$ & $b$ &  $a$  &  $b$  & $c$ & $d$ 
&
$a$&$b$ & $a$& $b$&$c$ & $d$ \\
$b$ & $a$ &       &       &     & $c$ 
& $b$&$a$ & & & & $c$ \\
    &     &       &       &     &     
    &
    & & & & & \\ \hline
\end{tabular}
\caption{Left is STV and right is Meek STV.}
\label{tab:fb_scot}
\end{table}

\begin{table}[h]
\centering
\setlength{\tabcolsep}{2.5pt}

\begin{tabular}{ccccccccc||ccccccccc}

\multicolumn{9}{c||}{$m=5$, $k=2$}
&
\multicolumn{9}{c}{$m=5$, $k=2$}
\\ \hline

$\succ_1$ & $\succ_1^*$ & $\succ_{2,3}$ & $\succ_{4,5}$ & $\succ_{6,7}$ & $\succ_{8,9}$ & $\succ_{10,11}$ & $\succ_{12,13}$ & $\succ_{14}$
&
$\succ_1$ & $\succ_1^*$ & $\succ_{2,3}$ & $\succ_{4,5}$ & $\succ_{6,11}$ & $\succ_{12,17}$ & $\succ_{18,19}$ & $\succ_{20,21}$ & $\succ_{22}$
\\ \hline

$a$&$c$ &$c$ &$c$ &$d$ &$d$ &$e$ &$e$ &$d$
&
$a$&$c$ &$c$ &$c$ &$d$ &$d$ &$e$ &$e$ &$e$
\\

$b$&$b$ &$d$ &$e$ &$c$ &$e$ &$d$ &$c$ &$e$
&
$b$&$b$ &$d$ &$e$ &$c$ &$e$ &$d$ &$c$ &$a$
\\

$c$&$a$ &$e$ &$d$ &$e$ &$c$ &$c$ &$d$ &$a$
&
$c$&$a$ &$e$ &$d$ &$e$ &$c$ &$c$ &$d$ &$d$
\\

$d$&$d$ &$a$ &$b$ &$a$ &$b$ &$a$ &$b$ &$b$
&
$d$&$d$ &$a$ &$b$ &$a$ &$b$ &$a$ &$b$ &$b$
\\

$e$&$e$ &$b$ &$a$ &$b$ &$a$ &$b$ &$a$ &$c$
&
$e$&$e$ &$b$ &$a$ &$b$ &$a$ &$b$ &$a$ &$c$
\\ \hline

\end{tabular}

\caption{Left is $k$PM and $k$OM, and right is Monroe PM and Monroe OM.}
\label{tab:fb_2}
\end{table}

In Table \ref{tab:fb_scot}, in the left profile for STV and in the right profile for Meek STV, $W=\{a,c\}$ while $W^*=\{a,b\}$. Voter 1's outcome improves by swapping $b$ with their favorite candidate $a$ in both profiles.

In the middle profile under CC PM and CC OM, $\{d,e\}=68$ and $\{c,d\}=67$. Therefore $W=\{d,e\}$. In the election with $\succ_1^*$, $\{d,e\}=68$ and $\{c,d\}=69$. Therefore $W^* =\{c,d\}$. Voter 1's outcome improves by swapping $c$ with their favorite. 

In the right profile under Bloc, $W=\{b,d\}$ while there is a tie between $\{b,d\}$ and $\{b,c\}$ for $W^*$. There is a 50\% chance that voter 1 improves their outcome by swapping $c$ with their favorite. 

In Table \ref{tab:fb_2}, in the left profile under $k$PM and $k$OM, $W=\{d,e\}$ as $a=12$, $b= 10$, $c=38$, $d=41$, and $e=39$. $W^*=\{c,d\}$ as $a=10$, $b= 10$, $c=40$, $d=41$, and $e=39$. Voter 1's outcome improves by swapping $c$ with their favorite. 

In the right profile under Monroe PM and Monroe OM, $\{d,e\}=77$ and $\{c,d\}=76$. Therefore $W=\{d,e\}$. The representatives are $c(\{c,d\})=\{ 1,\ldots,9,20,21 \}$, $d(\{c,d\})=\{ 10,\ldots,19,22 \}$, $d(\{d,e\})=\{ 1,2,3,6,\ldots,13 \}$, and $e(\{d,e\})=\{ 4,5,14,\ldots,22 \}$. In the election with $\succ_1^*$, $\{d,e\}=77$ and $\{c,d\}=78$. Therefore $W^*=\{c,d\}$. Voter 1's outcome improves by swapping $c$ with their favorite. 
\end{proof}

\section{Discussion}
\label{sec:discussion}
In this paper we aim to address two gaps in the multiwinner social choice literature: the lack of axioms relating to voter strategy and a lack of work that allows for incomplete preferences. To both ends, we obtain significant and novel results.

\textbf{How incomplete ballots are handled has tremendous implications for voter strategy.} For each of the three rules based on the Borda score, we use optimistic and pessimistic models of handling partial ballots. Unranked candidates (under $k$-Borda) or unranked committees (under Monroe and CC) are given 0 points under the pessimistic model, but are given points as if the voters were to have ranked them in the highest empty position on their ballot under the optimistic model. As the results illustrate, these two models allow for massively different strategies. $k$PM and CC PM fail append later-no-harm and satisfy append later-no-help, while $k$OM and CC OM fail append later-no-help and satisfy append later-no-harm. The pessimistic models encourage truncation strategies, which makes sense as voters can give candidates or committees they dislike 0 points by leaving them unranked. Meanwhile, the optimistic models encourage burying strategies; the fewer candidates a voter ranks, the more points they give to unranked and undesired candidates or committees. Thus, by submitting longer, dishonest ballots, voters give the lower-ranked candidates or committees fewer points than they would have with a short ballot.

In real elections, especially those with lots of candidates or alternatives, it is unrealistic to expect voters to have complete preferences. Yet, multiwinner social choice literature frequently does not consider this fact, assuming complete preferences. These results make clear the need for more work analyzing multiwinner elections with incomplete ballots, as results can differ greatly depending on how these ballots are treated.

\textbf{No skipped payment differentiates similar voting rules.}  To our knowledge, no skipped payment is the first axiom proved to separate STV from Meek STV, as Meek STV satisfies it while STV fails. In both rules, voters have limited currency. Their ballot has a value of 1, and they would like to spend it in a way that optimizes their outcome. STV fails no skipped payment, meaning a voter can avoid paying for a winning candidate higher on their ballot so they can spend more on a lower ranked one.  In Meek STV, which satisfies no skipped payment, this does not work. By failing no skipped payment, under STV voters can manipulate their preferences to be skipped, thus saving their entire vote for down-ballot candidates.

Additionally, no skipped payment separates the two committee-based rules, as CC satisfies it while Monroe fails. CC and Monroe are quite similar, with the only difference being that under CC each voter is represented by their favorite candidate from the committee while under Monroe each candidate from the committee must represent an equal number of voters. By failing no skipped payment, under Monroe if a voter suspects that one of their top candidates will be skipped, they could insincerely move a candidate up on their ballot to improve their outcome.

\textbf{STV Meek STV perform very well on later-no-harm and later-no-help properties.} Both voting rules satisfy all four of the later-no-harm and later-no-help properties, being the only rules to satisfy more than one of them. Thus, the two rules are immune to the truncation and burying strategies afforded by the failure of the later-no-harm and later-no-help axioms. We should note that STV is still susceptible to some less intuitive truncation strategies that do not involve removing lower-ranked candidates to help a higher one \cite{graham2}. 

\textbf{No rule we study satisfies favorite betrayal.} We expected this to be true prior to proving it, given that nearly all single-winner election rules fail it. This aligns with the Gibbard–Satterthwaite theorem \cite{gibbard,satt}, as we find that none of the voting rules we study are fully strategyproof.

\textbf{These axioms should be understood within the larger context of each voting rule.} Just because one voting rule performs worse on these strategic axioms does not mean it is an objectively worse voting rule overall. For example, $k$-Borda performs worse than STV on this paper's axioms, but $k$-Borda performs better under monotonicity axioms \cite{elkind}. Elkind et al. \cite{elkind} and Faliszewski et al. \cite{faliszewski2017} note that an election rule having a certain property can be of elevated or lowered importance depending upon the nature of the election. If this thinking is applied to these strategic axioms, we believe that satisfying append later-no-help is more important for elections in which the voters may be unfamiliar with some of the candidates. In these elections, it is very natural and likely that voters submit partial ballots that are honest. If an election rule that fails append later-no-help is used for one of these elections, then voters will have lots of opportunities to bury candidates they do not like. Contrastingly, consider an election in which voters are likely to submit honest full ballots. An example would be critics voting for a shortlist of finalists for an award: experts who are well informed of all candidates. It is particularly important that this voting rule satisfies append later-no-harm as voters should not be harming their favorite candidates by honestly ranking lots of candidates.

\textbf{Future work.} These results are purely theoretical; it would be interesting to see how frequently voters succeed when following certain strategies across these election rules. Similarly, it would be interesting to see how frequently following these strategies backfires. Even if a voting rule is susceptible to a strategy, when voters try to follow it, they may actually worsen their outcome. A game-theoretic analysis in which multiple voters or blocs of voters, each with little information, try to select the best strategy could offer valuable insights. Lastly, more work in multiwinner social choice should allow for incomplete preferences.


\newpage

\bibliography{references.bib}

\end{document}